\newcommand*{\eg}{e.g.\@\xspace}
\newcommand{\br}{\mathbf{r}}
\newcommand{\bu}{\mathbf{u}}
\newcommand{\bv}{\mathbf{v}}
\newcommand{\cA}{\mathcal{A}}
\newcommand{\Oh}{\mathcal{O}}
\newcommand{\bbN}{\mathbb{N}}
\newcommand{\N}{\bbN}
\newcommand{\bbZ}{\mathbb{Z}}
\newcommand{\bbQ}{\mathbb{Q}}
\newcommand{\bbD}{\mathbb{D}}
\newcommand{\bbS}{\mathbb{S}}
\newcommand{\bbF}{\mathbb{F}}
\newcommand{\bbK}{\mathbb{K}}
\newcommand{\bbL}{\mathbb{L}}
\newcommand{\fZ}{\mathfrak{Z}}
\newcommand{\fF}{\mathfrak{F}}
\renewcommand{\leq}{\leqslant}
\renewcommand{\geq}{\geqslant}
\renewcommand{\le}{\leqslant}
\renewcommand{\ge}{\geqslant}
\newcommand{\wh}[1]{\widehat{#1}}
\newcommand{\set}[1]{\{ #1 \}}
\newcommand{\sem}[1]{\left\llbracket #1 \right\rrbracket}
\newcommand{\transpose}{\mathsf{T}}
\newcommand{\td}[1]{\tilde{#1}}
\title{On polynomial recursive sequences}
\author{Micha\"{e}l Cadilhac}{DePaul University,
  Chicago, IL, USA}{michael@cadilhac.name}{https://orcid.org/0000-0001-9828-9129}{}
\author{Filip Mazowiecki}{Max Planck Institute for Software Systems, Germany}{filipm@mpi-sws.org}{}{}
\author{Charles Paperman}{Universit\'{e} de Lille, France}{charles.paperman@univ-lille.fr}{}{}
\author{Micha\l{} Pilipczuk}{University of Warsaw, Poland}{michal.pilipczuk@mimuw.edu.pl}{}%
       {This work is a part of project TOTAL that has received funding from the 
        European Research Council (ERC) under the European Union's Horizon 2020 
        research and innovation programme, grant agreement No.~677651.}
\author{G{\'{e}}raud S{\'{e}}nizergues}{Universit\'{e} de Bordeaux, France}{geraud.senizergues@u-bordeaux.fr}{}{}
\authorrunning{M. Cadilhac, F. Mazowiecki, Ch. Paperman, Mi. Pilipczuk and G. S{\'{e}}nizergues}
\keywords{recursive sequences, expressive power, weighted automata, higher-order
  pushdown automata}
\begin{document}

\maketitle

\begin{abstract}
  We study the expressive power of {\em{polynomial recursive sequences}}, a
  nonlinear extension of the well-known class of linear recursive sequences.
  These sequences arise naturally in the study of nonlinear extensions of
  weighted automata, where (non)expressiveness results translate to class
  separations.  A typical example of a polynomial recursive sequence is
  $b_n=n!$.  Our main result is that the sequence $u_n=n^n$ is not polynomial
  recursive.
\end{abstract}

\begin{picture}(0,0)
\put(392,25)
{\hbox{\includegraphics[width=40px]{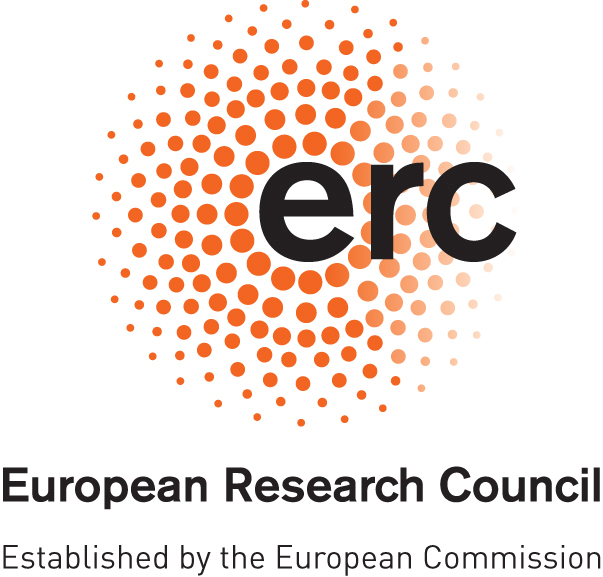}}}
\put(382,-35)
{\hbox{\includegraphics[width=60px]{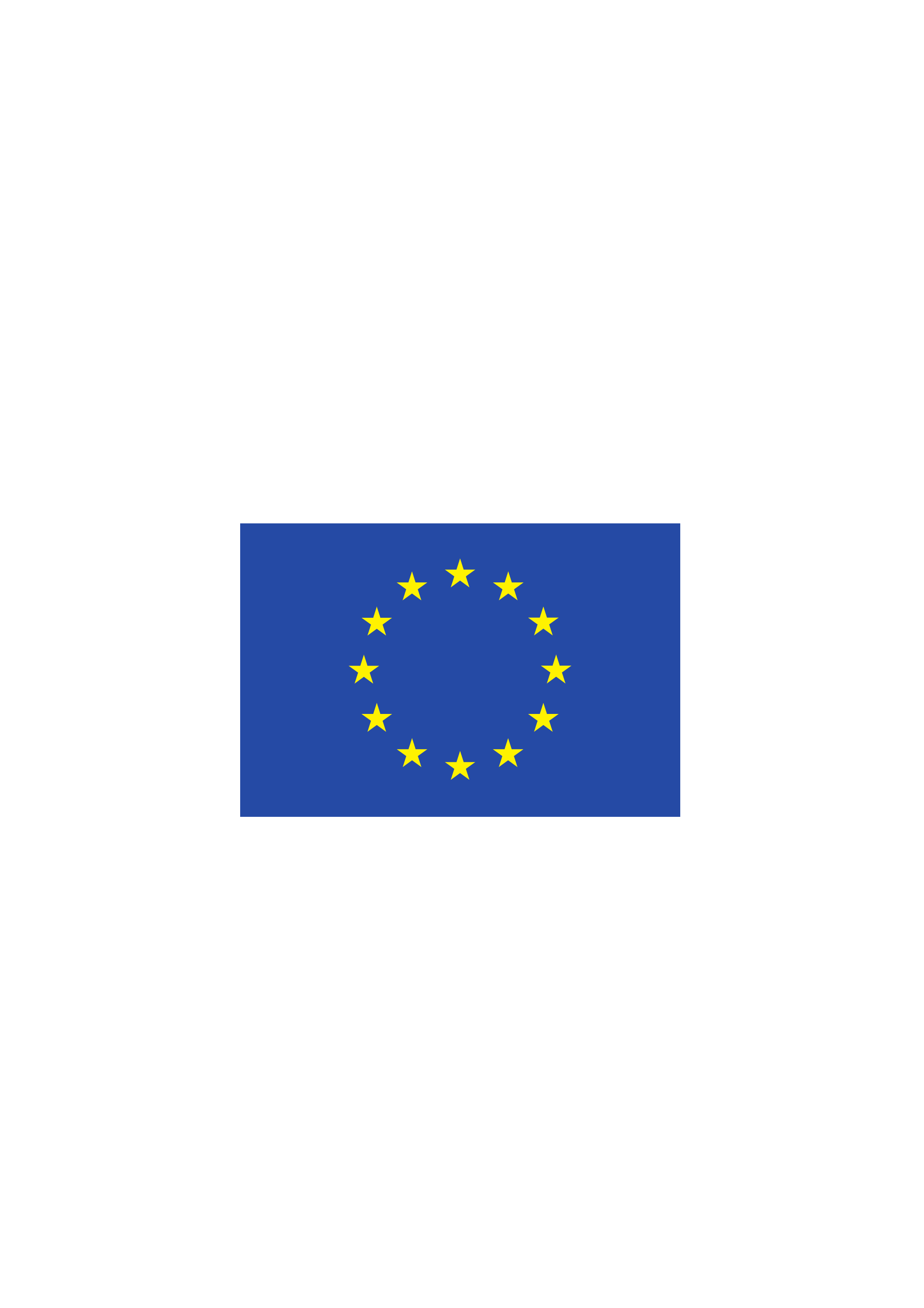}}}
\end{picture}


\tableofcontents

\newpage

\section{Introduction}
\label{sec:introduction}
Sequences defined recursively arise naturally in many areas, particularly in mathematics and computer science.
One of the most studied classes is that of \emph{linear recursive sequences}. 
Such sequences are defined by fixing the values of the first $k$ elements, while every subsequent element can be obtained as a linear combination of the $k$ elements preceding it.
The most famous example is the Fibonacci sequence, defined by setting $f_0 = 0$, $f_1 = 1$, and the recurrence relation $f_{n+2} = f_{n+1} + f_n$. 

It is well known that every linear recursive sequence can be defined by a system of $k$ mutually recursive sequences, 
where for every sequence we fix the initial value and provide a recurrence relation expressing the $(n+1)$st element as a linear combination of the $n$th elements of all the sequences~\cite{halava2005skolem}.
For example, to define the Fibonacci sequence $f_n$ in this way, one needs one
auxiliary sequence: we set $f_0 = 0$, $g_0 = 1$, and postulate
\begin{equation}\label{eq:fibo}
\begin{cases}
f_{n+1} = g_{n},  \\
g_{n+1} = f_{n} + g_{n}.
\end{cases}
\end{equation}

In this paper we study \emph{polynomial recursive sequences} over rational numbers that generalise linear recursive sequences. 
They are defined by systems of sequences like~\eqref{eq:fibo}, but on the right hand side we allow arbitrary polynomial expressions, rather than just linear combinations.
For example, the sequence $b_n = n!$ can be defined in this way using one auxiliary sequence: we may set $b_0 = c_0 = 1$ and write
\begin{equation}\label{eq:factorial}
\begin{cases}
b_{n+1} = b_{n} \cdot c_{n}, \\
c_{n+1} = c_{n} + 1.
\end{cases}
\end{equation}
Thus, the recurrence relation uses two polynomials: $P_1(x_1,x_2) = x_1x_2$ and $P_2(x_1,x_2) = x_2 + 1$.

The two classes of linear and polynomial recursive sequences appear naturally in automata theory, and in particular in connection with weighted automata and higher-order pushdown automata.
\emph{Weighted automata} over the rational semiring are a quantitative variant of finite automata that assign rational numbers to words~\cite{DrosteHWA09}. 
In the special case of a $1$-letter alphabet, each word can be identified with its length. 
Then a weighted automaton defines a mapping from natural numbers (possible
lengths) to rationals, and this can be seen as a sequence.
It is known that sequences definable in this way by weighted automata are exactly the linear recursive sequences~\cite{BarloyFLM20}.
\emph{Pushdown automata of order k} can be used for defining mappings from words
to words~\cite{Senizergues07}; in particular, for $k=2$ and 1-letter alphabets, such automata compute exactly the linear recursive sequences of natural integers~\cite{FerMarSen14}.

Thus, nonlinear extensions of linear recursive sequences may correspond to nonlinear extensions of weighted automata.
For the latter, consider three examples:
\begin{itemize}
 \item \emph{polynomial recurrent relations} that generalise pushdown automata of order~3~\cite{FraSen06,Senizergues07};
 \item \emph{cost-register automata} which arose as a variant of streaming transducers~\cite{AlurC10,AlurDDRY13};
 \item \emph{polynomial automata}, connected to reachability problems for vector addition systems~\cite{BenediktDSW17}.
\end{itemize}
Surprisingly, these three models, although introduced in different contexts,
are all equivalent.\footnote{This is a simple but technical observation as the three models are essentially syntactically equivalent. 
Throughout the paper we will use the name {\em{cost-register automata}} to refer to all three models.} 
Moreover, over unary alphabets
they define exactly polynomial recursive sequences, in the same fashion as
weighted automata (respectively order 2 pushdown automata) over unary alphabets define
linear recursive sequences.


The goal of this paper is to study the expressive power of polynomial recursive sequences.
Clearly, this expressive power extends that of linear recursive sequences: it is easy to see that every linear recursive sequence has growth bounded by $2^{\Oh(n)}$, while already the sequence $b_n=n!$ grows
faster. In fact, already the recurrence relation $a_0=2,\ a_{n+1}=\left(a_n\right)^2$ defines the sequence $2^{2^n}$, whose growth is doubly-exponential.
However, there are well-known integer sequences related to these examples for which definability as a polynomial recursive sequence seems much less clear.
The first example is the sequence $u_n=n^n$.
The second example is the sequence of Catalan numbers $C_n=\frac{1}{n+1}\binom{2n}{n}$.
Note that by Stirling's approximation, $n^n$ is asymptotically very close to $n!$, while $C_n$ is, up to factors polynomial in $n$, roughly equal to $4^n$.
For these reasons, simple asymptotic considerations cannot prove the sequences
$u_n=n^n$ and $C_n$ to be not polynomial recursive.
Recall that the Catalan numbers admit multiple combinatorial interpretations, which can be used to derive the recurrence formulas
$C_{n+1}=\sum_{i=0}^n C_iC_{n-i}$ and $(n+2)C_{n+1} = (4n+2)C_n$. Note that these formulas are {\em{not}} of the form
of recurrence formulas considered in this work.
Additionally, it is known that Catalan numbers $C_n$ are {\em{not}} linear recursive (see \eg~\cite{BhattiproluGV17}), despite having growth $2^{\Oh(n)}$.

\subparagraph*{Our results.}
We show that both the sequence of Catalan numbers $C_n$ and the sequence $u_n=n^n$ are not polynomial recursive.
For this, we present two techniques for proving that a sequence is {\em{not}} polynomial recursive.
The first technique for Catalan numbers is number-theoretical: we show that a polynomial recursive sequence of integers is ultimately periodic modulo any large enough prime.
The second technique for $n^n$ is more algebraic in nature: we show that for every polynomial recursive sequence there exists $k\in \N$ such that every $k$ consecutive elements of the sequence satisfy a nontrivial polynomial equation.
The fact that $u_n=n^n$ is not polynomial recursive is our main result.
These inexpressibility results were announced without proofs by the
fifth coauthor in an invited talk in 2007~\cite{Senizergues07}.  The present paper
contains proofs and extensions of these results.

\subparagraph*{Applications.}
The discussed models of cost-register automata~\cite{FraSen06,AlurDDRY13,BenediktDSW17} are  not the only nonlinear extensions of weighted automata that appear in the literature. 
We are aware of at least two more extensions: weighted context-free grammars~\cite{baker1979trainable,BhattiproluGV17} and weighted MSO logic~\cite{DrosteG07,KreutzerR13}. 
As it happens, over the $1$-letter alphabet,
weighted context-free grammars can define Catalan numbers, and weighted MSO logic can define $n^n$.
Therefore, as a corollary of our results we show that functions expressible in pushdown-automata of level 4,
weighted context-free grammars and weighted MSO logic are not always expressible in the class of cost-register automata.

The class of holonomic sequences is another extension of linear recursive sequences~\cite{KauersP11}. 
These sequences are defined recursively with one sequence, but the coefficients in the recursion are polynomials of the element's index. For example, $b_n = 1$ and $b_{n+1} = (n+1)b_n$ defines $b_n = n!$.
The expressiveness of this class has also been studied and in particular the sequence $n^n$ is known to be not in the class of holonomic sequences~\cite{Gerhold04}.
As a corollary of our results one can show that there are no inclusions between the classes of holonomic sequences and polynomial recursive sequences. 
On the one hand every holonomic sequence is asymptotically bounded by $2^{p(n)}$ for some polynomial $p$~\cite{KauersP11}, and the sequence $a_n = 2^{2^n}$ is polynomial recursive.
On the other hand, Catalan numbers admit a definition as a holonomic sequence: $C_0 = 1$ and $(n+2)C_{n+1} = (4n+2)C_n$.
In Section~\ref{sec:conclusion} we discuss the class of \emph{rational recurrence sequences} that generalises both holonomic and polynomial recursive sequences.



\subparagraph*{Organisation.}
In Section~\ref{sec:preliminaries} we give basic definitions and examples of linear and polynomial recursive sequences.
In Section~\ref{sec:simple} we show that the definition of polynomial recursive sequences requires a system of sequences and, unlike linear recursive sequences, cannot be equivalently defined using only one sequence. 
Then in Sections~\ref{sec:periodicity} and~\ref{sec:cancelling} we show that the sequence of Catalan numbers $C_n$ and the sequence $u_n=n^n$ are not polynomial recursive.
We conclude in Section~\ref{sec:conclusion}. In Appendix~\ref{sec:weighted} we explain in details our corollaries for weighted automata.

\section{Preliminaries}
\label{sec:preliminaries}
By $\bbN$ we denote the set of nonnegative integers.
A {\em{sequence}} over a set $\bbD$ is a function $u \colon \bbN \to \bbD$; all the sequences considered in this work are over the field of rationals $\bbQ$.
We use the notation $\langle u_n \rangle_{n \in \bbN}$ for elements of sequences, where $u_n=u(n)$.
Also, we use bold-face letters as a short-hand for sequences, e.g., $\bu = \langle u_n
\rangle_{n \in \bbN}$.

We now introduce the two main formalisms for describing sequences: linear recursive sequences and polynomial recursive sequences.


\subparagraph*{Linear recursive sequences.}
A {\em{$k$-variate linear form}} (or \emph{linear form} if $k$ is irrelevant) over $\bbQ$ is a function $L\colon \bbQ^k\to \bbQ$ of the form
\[L(x_1,\ldots,x_k)=a_1x_1+\ldots+a_kx_k\]
for some $a_1,\ldots,a_k\in \bbQ$.
A sequence of rationals $\bu$ is a {\em{linear recursive sequence}} if there exist $k\in \N$ and a $k$-variate linear form $L$ such that $\bu$ satisfies the recurrence relation
\begin{align}\label{def:lrs1}
u_{n+k} = L(u_{n},\ldots,u_{n+k-1})\qquad\textrm{for all }n\in \bbN.
\end{align}
Observe that such a sequence is uniquely determined by the form $L$ and its first $k$ elements: $u_0,\ldots,u_{k-1} \in \bbQ$. 
The minimal $k$ for which a description of $\bu$ as in~\eqref{def:lrs1} can be given is called the {\em{order}} of $\bu$.
For example, Fibonacci numbers are uniquely defined by the recurrence relation $f_{n+2} = f_{n+1} + f_n$ and starting elements $f_0 = 0$, $f_1 = 1$.
Note that this recurrence relation corresponds to the linear form $L(x_1,x_2)=x_1+x_2$.

We now present a second definition of linear recursive sequences which, as we will explain, is equivalent to the first definition. 
Suppose $\bu^1,\bu^2,\ldots,\bu^k$ are sequences of rationals. We say that these sequences {\em{satisfy a system of linear recurrence equations}} if there are $k$-variate linear forms $L_1,\ldots,L_k$ such that:
\begin{align}\label{def:lrs2}
\begin{cases}
u^1_{n+1} = L_1(u^1_n,\ldots,u^k_n),\\
\vdots \\
u^k_{n+1} = L_k(u^1_n,\ldots,u^k_n).
\end{cases}
\end{align}
for all $n\in \bbN$. Note that such a system can be equivalently rewritten in the matrix form
\[\vec{u}_{n+1} = M \vec{u}_n\]
where $\vec{u}_n=(u^1_n,\ldots,u^k_n)^{\transpose}$ and $M$ is the $k\times k$ matrix over $\bbQ$ such that $M\vec{x}=(L_1(\vec{x}),\ldots,L_k(\vec{x}))^{\transpose}$ for all $\vec{x}\in \bbQ^k$.
Note that then $\vec{u}_n=M^n \vec{u}_0$ for all $n\in \bbN$.

It is well known that systems of linear recurrence equations can be equivalently used to define linear recursive sequences, as explained in the following result.

\begin{proposition}[\cite{halava2005skolem}]\label{prop:lrs}
A sequence $\bu$ is a linear recursive sequence if and only if there exists $k\in \bbN$ and sequences $\bu^1,\ldots,\bu^k$ that satisfy a system of linear recurrence equations, where $\bu^1=\bu$.
Moreover, the smallest $k$ for which this holds is the order of $\bu$.
\end{proposition}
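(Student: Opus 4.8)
The plan is to prove both implications while tracking the number of auxiliary sequences precisely, so that the quantitative ``moreover'' clause falls out by combining the two resulting bounds.

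For the forward direction, suppose $\bu$ is a linear recursive sequence of order $k$, witnessed by a form $L(x_1,\ldots,x_k)=a_1x_1+\cdots+a_kx_k$ satisfying $u_{n+k}=L(u_n,\ldots,u_{n+k-1})$. I would introduce the shifted sequences $u^i_n := u_{n+i-1}$ for $i=1,\ldots,k$, so that in particular $\bu^1=\bu$. These satisfy the companion-matrix system: $u^i_{n+1}=u^{i+1}_n$ for $i<k$ (pure shifts), and $u^k_{n+1}=u_{n+k}=a_1u^1_n+\cdots+a_ku^k_n$ by the recurrence for $\bu$. All right-hand sides are linear forms, so \eqref{def:lrs2} holds with exactly $k$ sequences. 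This shows the minimal system size is at most the order of $\bu$.

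For the converse, suppose $\bu^1,\ldots,\bu^k$ satisfy \eqref{def:lrs2} with $\bu^1=\bu$, and write the system in matrix form $\vec u_{n+1}=M\vec u_n$, so that $\vec u_n=M^n\vec u_0$ and $u_n=u^1_n$ is the first coordinate of $M^n\vec u_0$. The key tool here is the Cayley--Hamilton theorem: the characteristic polynomial $\chi_M(x)=x^k-c_{k-1}x^{k-1}-\cdots-c_0$ annihilates $M$, giving $M^k=c_{k-1}M^{k-1}+\cdots+c_0 I$. Multiplying this identity by $M^n$ and then reading off the first coordinate after applying it to $\vec u_0$ yields $u_{n+k}=c_{k-1}u_{n+k-1}+\cdots+c_0u_n$ for all $n$, so $\bu$ is linear recursive of order at most $k$. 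Hence the order of $\bu$ is at most the minimal system size.

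Combining the two bounds gives equality between the order of $\bu$ and the minimal $k$ admitting a system as in \eqref{def:lrs2}, which establishes both the equivalence and the ``moreover'' statement. The two direct verifications are routine; the only substantial ingredient is the use of Cayley--Hamilton in the converse to pass from a degree-$k$ matrix identity to a scalar recurrence of the same order. The point requiring care is the minimality bookkeeping: the forward construction must use exactly $k$ sequences when $k$ is the order, and the converse must produce a recurrence of order at most the number of sequences, so that neither inequality loses a unit and the two bounds meet.
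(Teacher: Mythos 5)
Your proof is correct, and the forward direction (the shift sequences $u^i_n=u_{n+i-1}$ feeding a companion system) is exactly the paper's. The converse, however, takes a genuinely different route. You invoke the Cayley--Hamilton theorem: $\chi_M(M)=0$ gives $M^{n+k}=c_{k-1}M^{n+k-1}+\cdots+c_0M^n$, and reading off the first coordinate of $M^n\vec u_0$ yields a recurrence of depth $k$ for $\bu$, hence order at most $k$. The paper instead argues by dimension counting: it packages the $k+1$ coordinate functionals $\vec x\mapsto \vec e\,M^i\vec x$, $i=0,\ldots,k$, into a linear map $R\colon\bbQ^k\to\bbQ^{k+1}$, observes that $\mathrm{im}\,R$ has co-dimension at least $1$, extracts a nonzero linear form $K$ vanishing on every window $(u_n,\ldots,u_{n+k})$, and then solves for the highest-index nonzero coefficient to get a recurrence of order at most $k$. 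Both arguments are clean and both deliver the quantitative bound needed for the ``moreover'' clause (your bookkeeping on the two inequalities is exactly right). What your route buys is brevity and a single classical citation; what the paper's route buys is explicitly stated there: the kernel argument is chosen because it generalizes. It is the template for Theorem~\ref{thm:cancelling}, where linear forms are replaced by the iterated polynomials $P^{(0)}_1,\ldots,P^{(k)}_1$ and linear dependence of $k+1$ vectors in $\bbQ^k$ is replaced by algebraic dependence of $k+1$ elements of $\bbQ(x_1,\ldots,x_k)$ (Claim~\ref{cl:independent}). Cayley--Hamilton has no analogue for nonlinear polynomial maps, so your argument, while perfectly valid here, would not foreshadow the cancelling-polynomial machinery that is the point of Section~\ref{sec:cancelling}.
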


To get more accustomed with this equivalent definition, let us consider the sequence $a_n=n^2$. Since $(n+1)^2 = n^2 + 2n + 1$, we consider two auxiliary sequences $b_n = n$ and $c_n = 1$.
The initial values of these sequences are $a_0 = b_0 = 0$ and $c_0 = 1$. Thus, $a_n$ can be defined by providing these initial values together with a system of linear equations
\begin{equation}\label{eq:square}
\begin{cases}
a_{n+1} = a_n + 2b_n + c_n, \\
b_{n+1} = b_n + c_n, \\
c_{n+1} = c_n.
\end{cases}
\end{equation}
In the matrix form, we could equivalently write that
$(a_n, b_n, c_n)^\transpose = M^n \vec{e}$,
where
\begin{align*}
M = 
\begin{pmatrix}
1 & 2 & 1\\
0 & 1 & 1 \\
0 & 0 & 1
\end{pmatrix},\;
\vec{e} = 
\begin{pmatrix}
0\\
0 \\
1
\end{pmatrix}.
\end{align*}
It can be readily verified that $a_n$ is also defined by the recurrence $a_{n+3} = 3a_{n+2} - 3a_{n+1} + a_n$.

The difference between the two definitions is that in~\eqref{def:lrs1} we have only one sequence, but the depth of the recursion can be any~$k$. Conversely, in~\eqref{def:lrs2} we are allowed to have $k$ sequences, but the depth of recursion is~$1$.
The equivalence provided by Proposition~\ref{prop:lrs} is quite convenient and is often used in the literature, see \eg~\cite{OuaknineW15}.

We give a short proof of Proposition~\ref{prop:lrs}, different from the proof in~\cite{halava2005skolem}. The reason is that this proof provides us with intuition that will turn out to be useful later on.

\begin{proof}[Proof of Proposition~\ref{prop:lrs}]
 For the left-to-right implication, suppose $\bu$ is a linear recursive sequence of order $k$; say it is defined by the recursive formula $u_{n+k}=L(u_n,\ldots,u_{n+k-1})$, where $L$ is a $k$-variate linear form.
 Define the sequences $\bu^1,\ldots,\bu^k$ by setting
 \[u^i_n\coloneqq u_{n+i-1}\qquad\textrm{for all }i\in \{1,\ldots,k\}\textrm{ and }n\in \bbN.\]
 Then $\bu^1=\bu$ and the sequences $\bu^1,\ldots,\bu^k$ satisfy the system of equations as in~\eqref{def:lrs2}, where $L_k=L$ and $L_i(x_1,\ldots,x_k)=x_{i+1}$ for $i\in \{1,\ldots,k-1\}$.
 
 For the right-to-left implication, suppose that there exist $k\in \bbN$ and sequences $\bu^1,\ldots,\bu^k$ that satisfy the system of equations~\eqref{def:lrs2} for some linear forms $L_1,\ldots,L_k$,
 such that $\bu=\bu^1$. Let $M$ be a $k\times k$ matrix over $\bbQ$ that encodes the linear forms $L_1,\ldots,L_k$; that is, $\vec{u}_n=M^n \vec{u}_0$, where $\vec{u}_n=(u^1_n,\ldots,u^k_n)^\transpose\in \bbQ^k$.
 Consider the linear map $R\colon \bbQ^k\to \bbQ^{k+1}$ defined as
 \[R(\vec{x}) = (\ \vec{e} M^0 \vec{x}\, ,\, \vec{e} M^1 \vec{x}\, ,\ \ldots\ ,\,\vec{e} M^k \vec{x}\ )^\transpose,\]
 where $\vec{e}=(1,0,\ldots,0)\in \bbQ^k$.
 Note that
 \begin{align}\label{eq:map}
   R(\vec{u}_n)=(u^1_n,u^1_{n+1},\ldots,u^1_{n+k})=(u_n,u_{n+1},\ldots,u_{n+k})\qquad\textrm{for all }n\in \bbN.
 \end{align}
 Observe that $R$ is a linear map from $\bbQ^k$ to $\bbQ^{k+1}$, hence the image of $R$ is a linear subspace of $\bbQ^{k+1}$ of co-dimension at least $1$.
 Hence, there exists a nonzero linear form $K\colon \bbQ^{k+1}\to \bbQ$ such that $\mathrm{im}\, R \subseteq \ker K$, or equivalently $K(R(\vec{x}))=0$ for all $\vec{x}\in \bbQ^k$.
 By~\eqref{eq:map}, we have
 \begin{align}\label{eq:cancelling-form}
 K(u_n,u_{n+1},\ldots,u_{n+k})=0\qquad\textrm{for all }n\in \bbN.
 \end{align}
 Let $a_0,a_1,\ldots,a_k\in \bbQ$ be such that
 \[K(x_0,\ldots,x_k)=a_0x_0+\ldots+a_kx_k.\]
Since $K$ is nonzero there exists the largest index $t$ such that $a_t\neq 0$.
From~\eqref{eq:cancelling-form} we infer that
 \[u_{n+t}= - \frac{a_{t-1}}{a_t}\cdot u_{n+t-1} - \frac{a_{t-2}}{a_t}\cdot u_{n+t-2} - \ldots -\frac{a_{0}}{a_t}\cdot u_{n} \qquad\textrm{for all }n\in \bbN,\]
 so $\bu$ is a linear recursive sequence of order at most $t$.
\end{proof}

\begin{remark}\label{rem:homog}
One could imagine setting up all the definitions presented above using {\em{affine forms}} instead of linear forms, that is, functions $A\colon \bbQ^k\to \bbQ$ of the form
\[A(x_1,\ldots,x_k)=a_1x_1+\ldots+a_kx_2+c,\]
where $a_1,\ldots,a_k,c\in \bbQ$. However, as we may always add constant sequences to the system of recurrence equations defining a sequence,
considering affine forms does not increase the expressive power. In fact, 
from Proposition~\ref{prop:lrs} it can be easily derived that we obtain exactly the same class of linear recursive sequences, regardless of whether we use linear or affine forms in both definitions.
\end{remark}

\subparagraph*{Poly-recursive sequences.} 
We now generalise the concept of linear recursive sequences by allowing polynomial functions instead of only linear forms.
The starting point of the generalisation is the definition via a system of recurrence equations, as in~\eqref{def:lrs2}.

\begin{definition}\label{deff:prs}
 A sequence of rationals $\bu$ is {\em{polynomial recursive}} (or {\em{poly-recursive}} for short) if there exist $k\in \bbN$, sequences of rationals $\bu^1,\ldots,\bu^k$ satisfying $\bu=\bu^1$, 
 and polynomials $P_1,\ldots,P_k\in \bbQ[x_1,\ldots,x_k]$ such that for all $n\in \bbN$, we have
\begin{align}\label{def:prs}
\begin{cases}
u^1_{n+1} = P_1(u^1_n,\ldots,u^k_n),\\
\vdots \\
u^k_{n+1} = P_k(u^1_n,\ldots,u^k_n).
\end{cases}
\end{align}
\end{definition}
Again, notice that polynomials $P_1, \ldots, P_k$ and the initial values $u^1_0, \ldots u^k_0$ uniquely determine the sequences $\bu^1,\ldots,\bu^k$, hence in particular the sequence $\bu=\bu^1$.


Let us examine a few examples. First, recall the sequences $a_n = 2^{2^n}$ and $b_n = n!$ defined in Section~\ref{sec:introduction}. 
Another example is the sequence $d_n = 2^{n^2}$.  Since $2^{(n+1)^2} = 2^{n^2 +
  2n + 1}$, we define $d_0 = e_0 = 1$ and let
\begin{equation*}
\begin{cases}
d_{n+1} = d_n \cdot (e_n)^2 \cdot 2, \\
e_{n+1} = e_n \cdot 2.
\end{cases}
\end{equation*}
The polynomials used in the last definition are $P_1(x_1,x_2) = 2x_1(x_2)^2$ and $P_2(x_1,x_2) = 2x_2$.
Notice that this idea can be easily generalised to define any sequence of the form $r^{Q(n)}$, where $r$ is a rational number and $Q$ is a polynomial with rational coefficients.
We remark that all three sequences $a_n=2^{2^n}$, $b_n=n!$, $d_n = 2^{n^2}$ are not linear recursive for simple asymptotic reasons (from the discussion in Section~\ref{sec:introduction}). 

\section{Simple poly-recursive sequences}
\label{sec:simple}
The following notion is a natural generalisation of the definition~\eqref{def:lrs1} of linear recursive sequences to the setting of recurrences defined using polynomials.

\begin{definition}\label{def:simple-prs}
 A sequence of rationals $\bu$ is {\em{simple poly-recursive}} if there exists $k\in \N$ and a polynomial $P\in \bbQ[x_1,x_2,\ldots,x_k]$ such that
 \begin{align}\label{def:prs2}
u_{n+k} = P(u_n,u_{n+1},\ldots,u_{n+k-1})\qquad\textrm{for all }n\in \N.  
 \end{align}
\end{definition}

Again, note that if $\bu$ is simple poly-recursive as above, then the polynomial $P$ and the first $k$ values $u_0,\ldots,u_{k-1}$ uniquely determine the sequence $\bu$.

Clearly, every linear recursive sequence is a simple poly-recursive sequence.
In fact, by Proposition~\ref{prop:lrs} and Remark~\ref{rem:homog}, the two notions would coincide if we required that the polynomial $P$ in the definition above has degree at most $1$.
On the other hand, observe that the same construction as in the first paragraph of the proof of Proposition~\ref{prop:lrs} shows that every simple poly-recursive sequence is poly-recursive.
We now prove that this inclusion is strict.




\begin{theorem}\label{prop:factorial}
The sequence $b_n = n!$ is not simple poly-recursive.
\end{theorem}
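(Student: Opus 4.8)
The plan is to assume toward a contradiction that $b_n=n!$ is simple poly-recursive, so that $b_{n+k}=P(b_n,\ldots,b_{n+k-1})$ for some $k\in\N$ and some $P\in\bbQ[x_1,\ldots,x_k]$, and to derive a contradiction by separating the contributions of $P$ according to total degree. The only arithmetic input I would use is the elementary identity $(n+i-1)!=n!\cdot z_i(n)$, where $z_i(n)\coloneqq (n+1)(n+2)\cdots(n+i-1)$ is a monic polynomial in $n$ of degree $i-1$ (with $z_1=1$); in particular $(n+k)!=n!\cdot z_{k+1}(n)$ with $\deg z_{k+1}=k$.

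First I would decompose $P=\sum_{d'=0}^{d}P_{d'}$ into homogeneous parts, where $P_{d'}$ collects the monomials of total degree $d'$ and $d=\deg P$. Evaluating at the arguments $(n!,(n+1)!,\ldots,(n+k-1)!)$ and using homogeneity to pull the common factor $n!$ out of every argument, each layer becomes $P_{d'}(n!,\ldots,(n+k-1)!)=(n!)^{d'}R_{d'}(n)$, where $R_{d'}(n)\coloneqq P_{d'}(z_1(n),\ldots,z_k(n))\in\bbQ[n]$. Since a monomial of total degree $d'$ is sent to a product of $z_i$'s of combined degree at most $d'(k-1)$, we get $\deg R_{d'}\le d'(k-1)$, and in particular $\deg R_1\le k-1$. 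The recurrence then reads
\[
\sum_{d'=0}^{d}(n!)^{d'}R_{d'}(n)=n!\cdot z_{k+1}(n)\qquad\text{for all }n\in\N.
\]

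The crux is a separation lemma: the functions $n\mapsto (n!)^{d'}$ grow at vastly different rates, since $(n!)^{d'}/(n!)^{d'-1}=n!\to\infty$ dominates any polynomial factor. Hence if $\sum_{d'}(n!)^{d'}S_{d'}(n)=0$ holds for all large $n$ with $S_{d'}\in\bbQ[n]$, then every $S_{d'}$ is the zero polynomial: otherwise the highest level with $S_{d'}\neq 0$ would dominate all lower levels combined and make the sum nonzero for large $n$. Applying this to the rearranged recurrence forces a level-by-level match, and reading off the level $d'=1$ yields $R_1(n)=z_{k+1}(n)$ as polynomials. This is the contradiction, because $\deg R_1\le k-1$ while $\deg z_{k+1}=k$.

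I expect the main obstacle to be precisely what this decomposition is designed to sidestep: a naive growth-rate comparison fails, because monomials of the same total degree can cancel, so one cannot simply argue that the single fastest-growing monomial dominates. Factoring out $(n!)^{d'}$ from each homogeneous layer converts the transcendental recurrence into an identity between the polynomial coefficients of the independent \emph{scales} $(n!)^{d'}$, after which the obstruction is purely one of polynomial degree: no linear form in $z_1,\ldots,z_k$ (all of degree $\le k-1$) can reproduce $(n+k)!/n!=z_{k+1}$ of degree $k$. The only point that deserves careful writing is the separation lemma, whose proof is the routine domination estimate sketched above.
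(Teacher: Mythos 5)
Your proof is correct, but it takes a genuinely different route from the paper's. The paper splits $P$ only coarsely, into its affine part $A$ and the remainder $Q$ (all monomials of total degree at least $2$), and then argues arithmetically: after clearing denominators by $m$, the quantity $\left(n!/m\right)^2$ divides the integer $Q(b_n,\ldots,b_{n+k-1})$, while the estimate $|A(b_n,\ldots,b_{n+k-1})|<b_{n+k}$ sandwiches that quantity strictly between $0$ and $2\cdot(n+k)!$; since $\left(n!/m\right)^2$ eventually exceeds $2\cdot(n+k)!$, this is impossible. You instead perform the full homogeneous decomposition $P=\sum_{d'}P_{d'}$, use the identity $(n+i-1)!=n!\cdot z_i(n)$ to factor the scale $(n!)^{d'}$ out of each layer, and invoke a separation-of-scales lemma to convert the recurrence into polynomial identities, so that the contradiction becomes a pure degree count: level $d'=1$ would force $R_1=z_{k+1}$, yet $\deg R_1\le k-1<k=\deg z_{k+1}$. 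Your bookkeeping checks out (the degree bound $\deg R_{d'}\le d'(k-1)$, the rearrangement placing $-z_{k+1}$ in the level-$1$ coefficient, and the domination estimate proving the lemma, since the top nonzero layer $(n!)^D S_D(n)$ eventually exceeds the at most $(n!)^{D-1}\cdot\mathrm{poly}(n)$ contribution of all lower layers). As for what each approach buys: the paper's proof is shorter and needs only the crude comparison of $(n!)^2$ against $(n+k)!$, but it leans on integrality and divisibility of the sequence values; yours avoids divisibility entirely (so it would survive for non-integer variants with the same multiplicative structure), extracts strictly more information (every layer $R_{d'}$ with $d'\neq 1$ must vanish identically), and is in fact closer in spirit to the technique the paper deploys later for $n^n$ in Lemma~\ref{lem:zero-sum}, where a cancelling polynomial evaluated at $n^n,\ldots,(n+k)^{n+k}$ is likewise rewritten as a sum of well-separated scales $P_i(n)^n$ with polynomial coefficients before the scales are played against one another.
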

\begin{proof}
Towards a contradiction, suppose there is $k\in \N$ and a polynomial $P\in \bbQ[x_1,\ldots,x_k]$ such that
\begin{align}\label{eq:bpoly}
b_{n+k}=P(b_n,b_{n+1},\ldots,b_{n+k-1})\qquad\textrm{for all }n\in \N. 
\end{align}
Let us write
\[P=Q+A,\]
where $Q,A\in \bbQ[x_1,\ldots,x_k]$ are such that $A$ is the sum of all the monomials in the expansion of $P$ that have degree at most $1$, while $Q$ is the sum of all the remaining monomials in the expansion of $P$.
Thus, $A$ is an affine form, while every monomial in the expansion of $Q$ has total degree at least $2$.

Since $A$ is an affine form, there exists a number $c\in \N$ such that
\[|A(q_1,\ldots,q_k)|<c+c\cdot \max_{1\leq i\leq k} |q_i|\qquad \textrm{for all }q_1,\ldots,q_k\in \bbQ.\]
Thus, for all $n>2c$ we have
\begin{align}\label{eq:Asmall}
|A(b_n,b_{n+1},\ldots,b_{n+k-1})| \leq c+c\cdot (n+k-1)!<(n+k)!=b_{n+k}.
\end{align}
Since by~\eqref{eq:bpoly} it follows that
\[Q(b_n,b_{n+1},\ldots,b_{n+k-1}) = b_{n+k}-A(b_n,b_{n+1},\ldots,b_{n+k-1}),\]
using~\eqref{eq:Asmall} we may conclude that for all $n>2c$ the following inequality holds:
\begin{align}\label{eq:small} 
0 < Q(b_n,b_{n+1},\ldots,b_{n+k-1}) < 2b_{n+k}.
\end{align}

Let $m$ be the product of all denominators of all the coefficients appearing in the expansion of $P$ into a sum of monomials.
Note that for all $n>m$, the number $\td{b}_n\coloneqq \frac{b_n}{m}=\frac{n!}{m}$ is an integer. Furthermore, we have that $\td{b}_n$ divides $\td{b}_{n'}$ for all $n'\geq n$.
Since every monomial in the expansion of $Q$ has total degree at least $2$, we infer that for all $n>m$, we have
\begin{align}\label{eq:divisible}
 \left(\td{b}_n\right)^2\ |\ Q(b_n,b_{n+1},\ldots,b_{n+k-1}). 
\end{align}
By combining~\eqref{eq:divisible} with the left inequality of~\eqref{eq:small}, we conclude that for all $n>\max(2c,m)$,
\[ Q(b_n,b_{n+1},\ldots,b_{n+k-1}) \geq \left(\td{b}_n\right)^2.\]
This bound together with the right inequality of~\eqref{eq:small} implies that
\[\left(\frac{n!}{m}\right)^2 = \left(\td{b}_n\right)^2 < 2b_{n+k} = 2\cdot (n+k)!.\]
This inequality, however, is not true for every sufficiently large $n$, a contradiction.
\end{proof}

\section{Modular periodicity}
\label{sec:periodicity}
Recall that a sequence of numbers $\br$ is {\em{ultimately periodic}} if there exist $N,k\in \N$ such that for all $n\geq N$, we have $r_n=r_{n+k}$.
In this section we prove the following periodicity property of poly-recursive sequences, which, by means of contradiction, provides a basic technique for proving that a given sequence is not poly-recursive.

\begin{theorem}\label{thm:periodic}
 Suppose $\bu$ is a poly-recursive sequence of integers. Then there exists $a\in \N$ such that for every prime $p>a$, the sequence $r_n\coloneqq u_n\bmod p$ is ultimately periodic.
\end{theorem}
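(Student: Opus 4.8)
The plan is to reduce the entire system of recurrences modulo $p$ and then exploit finiteness of the resulting state space. The first difficulty is that although $\bu=\bu^1$ takes integer values, the auxiliary sequences $\bu^2,\ldots,\bu^k$ supplied by Definition~\ref{deff:prs} may take genuinely rational values, so reduction modulo $p$ is not a priori meaningful for them. To handle this I would work in the local ring $\bbZ_{(p)}=\{a/b\in\bbQ : p\nmid b\}$. Let $D\in\bbN$ be the product of all denominators of the finitely many coefficients of the polynomials $P_1,\ldots,P_k$ together with the denominators of the initial values $u^1_0,\ldots,u^k_0$, and set $a:=D$. For any prime $p>a$ we have $p\nmid D$, so every coefficient of every $P_i$ and every initial value lies in $\bbZ_{(p)}$.

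The key observation is that $\bbZ_{(p)}$ is a ring: if $x_1,\ldots,x_k\in\bbZ_{(p)}$ and the coefficients of $P_i$ lie in $\bbZ_{(p)}$, then $P_i(x_1,\ldots,x_k)\in\bbZ_{(p)}$. By induction on $n$, starting from the initial values, this gives $u^i_n\in\bbZ_{(p)}$ for all $i$ and all $n$. Consequently the reduction homomorphism $\pi_p\colon\bbZ_{(p)}\to\bbF_p$ is defined on every term of every sequence. Writing $\bar u^i_n:=\pi_p(u^i_n)$ and letting $\bar P_i$ denote the polynomial over $\bbF_p$ obtained from $P_i$ by reducing its coefficients, the fact that $\pi_p$ is a ring homomorphism commuting with polynomial evaluation yields
\[
\bar u^i_{n+1}=\bar P_i(\bar u^1_n,\ldots,\bar u^k_n)\qquad\textrm{for all }i\textrm{ and }n.
\]
Thus the reduced state vector $\bar{\vec{u}}_n:=(\bar u^1_n,\ldots,\bar u^k_n)\in\bbF_p^k$ evolves by iterating a single function $\Phi\colon\bbF_p^k\to\bbF_p^k$, namely $\Phi(\vec{x})=(\bar P_1(\vec{x}),\ldots,\bar P_k(\vec{x}))$.

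The remaining argument is purely finitary. The set $\bbF_p^k$ has only $p^k$ elements, so among $\bar{\vec{u}}_0,\ldots,\bar{\vec{u}}_{p^k}$ two must coincide, say $\bar{\vec{u}}_{s}=\bar{\vec{u}}_{t}$ with $s<t$. Since the evolution is deterministic, with $\bar{\vec{u}}_{n+1}$ depending on $\bar{\vec{u}}_n$ only through $\Phi$, applying $\Phi$ repeatedly shows $\bar{\vec{u}}_{n}=\bar{\vec{u}}_{n+(t-s)}$ for every $n\geq s$; hence the sequence of reduced states is ultimately periodic. Projecting onto the first coordinate, $\bar u^1_n$ is ultimately periodic, and because $u_n=u^1_n$ is an integer we have $r_n=u_n\bmod p=\bar u^1_n$, so $r_n$ is ultimately periodic, as required.

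The only real subtlety, and the step I would be most careful about, is the denominator bookkeeping that guarantees reduction modulo $p$ is well defined on the rational-valued auxiliary sequences; once the dynamics has been transported to the finite set $\bbF_p^k$, ultimate periodicity is just the standard pigeonhole principle for iterating a function on a finite set.
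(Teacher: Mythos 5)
Your proof is correct, but it takes a genuinely different route from the paper's. Where you localize at $p$, the paper clears denominators explicitly: it first normalizes the system so that the initial values are integers and all monomials in all the $P_i$ share one common total degree $d>1$ (by adjoining a constant-$1$ sequence governed by $u^{k+1}_{n+1}=\left(u^{k+1}_n\right)^d$ and padding each monomial with powers of $x_{k+1}$), then picks $a$ so that $\td{P}_i \coloneqq a\cdot P_i$ has integer coefficients, and verifies by induction that the rescaled sequences $\td{u}^i_n \coloneqq a^{\frac{d^n-1}{d-1}}\cdot u^i_n$ are integers satisfying the integer-coefficient system $\td{u}^i_{n+1}=\td{P}_i(\td{u}^1_n,\ldots,\td{u}^k_n)$ --- homogeneity of degree $d$ is exactly what makes the exponent $\frac{d^n-1}{d-1}$ propagate through the induction step. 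The pigeonhole argument then applies to the rescaled sequences, and a final step undoes the rescaling: with $b\equiv a^{-1}\bmod p$, one observes that $b^{\frac{d^n-1}{d-1}}\bmod p$ is itself ultimately periodic (it obeys the recurrence $b_{n+1}=b\cdot\left(b_n\right)^d$) and combines this with the periodicity of $\td{r}^1_n$ to recover periodicity of $r_n$. Your localization argument buys considerable simplicity: since any prime $p>D$ divides no denominator, all values of all auxiliary sequences lie in $\bbZ_{(p)}$, and the reduction $\pi_p\colon\bbZ_{(p)}\to\bbF_p$ is a ring homomorphism commuting with evaluation of the $P_i$, so the entire dynamics descends at once to a deterministic self-map of the finite set $\bbF_p^k$; the homogenization trick, the rescaling, and the final unrescaling step all disappear. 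What the paper's approach buys is elementarity --- it never leaves integer arithmetic and explicit congruences, requiring no localization language --- together with a normalization of poly-recursive systems (integer coefficients after homogenization and rescaling) that has some independent interest. Both arguments yield the same uniformity: your threshold $a=D$, like the paper's $a$, depends only on the defining system and not on $p$, as the statement requires.
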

\begin{proof}
 Let $\bu$ be defined by the system of recursive equations
 \begin{align}\label{def:prs-periodic}
\begin{cases}
u^1_{n+1} = P_1(u^1_n,\ldots,u^k_n),\\
\vdots \\
u^k_{n+1} = P_k(u^1_n,\ldots,u^k_n),
\end{cases}
\end{align}
where $\bu^1,\ldots,\bu^k$ are sequences such that $\bu^1=\bu$ and $P_1,\ldots,P_k\in \bbQ[x_1,\ldots,x_k]$.

\newcommand{\wt}[1]{\widetilde{#1}}

Without loss of generality we may assume that the initial values $u_0^1,\ldots,u_0^k$ are integers.
Indeed, this is certainly the case for $u_0^1=u_0$, while for every $i>1$, 
we may rewrite the system so that it uses the sequence $\wt{\bu}^i=q_i\cdot \bu^i$ instead of
$\bu^i$, where $q_i$ is the denominator of $u^i_0$. 
For this, the starting condition for $\wt{\bu}^i$ can be set as $\wt{u}^i_0=q_i\cdot u^i_0$, which is an integer, 
in all polynomials $P_1,\ldots,P_k$ we may substitute $x_i$ with $x_i/q_i$, and the polynomial $P_i$ can be replaced with $q_i\cdot P_i$.

Further, without loss of generality we may assume that all the monomials present in the expansions of all the polynomials $P_1,\ldots,P_k$ have the same total degree $d>1$.
Indeed, let $d>1$ be any integer that is not smaller than the degrees of all the polynomials $P_1,\ldots,P_k$. To the system~\eqref{def:prs-periodic} we add a new sequence $\bu^{k+1}$, defined by
setting 
\[u^{k+1}_0=1\qquad\textrm{and}\qquad u^{k+1}_{n+1}=\left(u^{k+1}_n\right)^d\ \textrm{for }n\in \bbN.\]
Thus $\bu^{k+1}$ is constantly equal to $1$. 
Then each monomial $M(x_1,\ldots,x_k)$ appearing in the expansion of any of the polynomials $P_i(x_1,\ldots,x_k)$ can be replaced by 
the monomial $M(x_1,\ldots,x_k)\cdot x_{k+1}^{d-t}\in \bbQ[x_1,\ldots,x_k,x_{k+1}]$, where $t$ is the total degree of $M$. 
It is straightforward to see that the modified system of recursive equations still defines $\bu=\bu^1$, while all monomials appearing in all the polynomials used in it have the same degree $d$.

After establishing these two assumptions, we proceed to the main proof. Let $a\in \N$ be a positive integer such that the polynomials 
\[\td{P}_i\coloneqq a\cdot P_i\]
all belong to $\bbZ[x_1,\ldots,x_k]$, that is, have integer coefficients.
For instance, one can take $a$ to be product of all the denominators of all the rational coefficients appearing in the 
polynomials $P_1,\ldots,P_k$. For all $i\in \{1,\ldots,k\}$ and $n\in \N$, let us define 
\[\td{u}^i_n\coloneqq a^{\frac{d^n-1}{d-1}}\cdot u^i_n.\]
By a straightforward induction we show that the sequences $\td{\bu}^1,\ldots,\td{\bu}^k$ satisfy the system of recursive equations
\begin{align}\label{def:prs-periodic2}
\begin{cases}
\td{u}^1_{n+1} = \td{P}_1(\td{u}^1_n,\ldots,\td{u}^k_n),\\
\vdots \\
\td{u}^k_{n+1} = \td{P}_k(\td{u}^1_n,\ldots,\td{u}^k_n).
\end{cases}
\end{align}
Indeed, the induction base is trivial and for the induction step recall that all monomials have the same degree $d$, hence
$$
\td{P}_i(\td{u}^1_n,\ldots,\td{u}^k_n) = a\cdot P_i(a^{\frac{d^n-1}{d-1}}\cdot u^1_n,\ldots,a^{\frac{d^n-1}{d-1}}\cdot u^k_n) = a\cdot a^{\frac{d^{n+1} - d}{d-1}}\cdot u^i_{n+1} = a^{\frac{d^{n+1} - 1}{d-1}}\cdot u^i_{n+1} =\td{u}^i_{n+1}.
$$
Observe that since the initial values $\td{u}^i_0=u^i_0$ are integers, and the polynomials $\td{P}_i$ have integer coefficients,
we can infer that all the entries of sequences $\td{\bu}^1,\ldots,\td{\bu}^k$ are integers.

We now show that for every prime $p>a$, the sequence $\br$ defined as
$r_n=u_n\bmod p$ is ultimately periodic; this will conclude the proof.  
For every $i\in \{1,\ldots,k\}$ and $n\in \N$, let 
\[\td{r}^i_n\coloneqq \td{u}^i_n\bmod p.\]
By~\eqref{def:prs-periodic2} and the fact that polynomials $\td{P}_i$ have integer coefficients, 
for every $n\in \N$ the vector of entries $(\td{r}^1_{n+1},\ldots,\td{r}^k_{n+1})$ is uniquely determined by the vector $(\td{r}^1_{n},\ldots,\td{r}^k_{n})$.
Since this vector may take only at most $p^k$ different values, it follows that the sequences $\td{\br}^1,\ldots,\td{\br}^k$ are ultimately periodic.

Now note that for every $n\in \N$, we have
\[a^{\frac{d^n-1}{d-1}}\cdot r_n \equiv a^{\frac{d^n-1}{d-1}} \cdot u_n = \td{u}^1_n \equiv \td{r}^1_n \mod p.\]
Since $p>a$ and $p$ is a prime, we have that $a$ and $p$ are coprime. 
Therefore, there exists an integer $b$ such that $ab\equiv 1\bmod p$. By multiplying the above congruence by $b^{\frac{d^n-1}{d-1}}$, we have
\begin{align}\label{eq:rn-final}
r_n\equiv b^{\frac{d^n-1}{d-1}}\cdot \td{r}^1_n \mod p.
 \end{align}
Observe that the sequence $b_n=b^{\frac{d^n-1}{d-1}}$ satisfies the recursive equation $b_{n+1}=b\cdot \left(b_n\right)^d$, hence the sequence $(b_n \bmod p)$ is ultimately periodic.
Since $\td{\br}^1$ is ultimately periodic as well, from~\eqref{eq:rn-final} we conclude that the sequence $\br$ is ultimately periodic.
\end{proof}

We use Theorem~\ref{thm:periodic} to prove that
the Catalan numbers are not poly-recursive.
Recall that the $n$th Catalan number $C_n$ is given by the formula
$C_n=\frac{1}{n+1} \binom{2n}{n}$.

Alter and Kubota~\cite{klapki} studied the behaviour of the Catalan numbers modulo primes.
It is easy to see (and proved in~\cite{klapki}) that for every prime $p$, the sequence $C_n$ contains infinitely many numbers divisible by $p$, and infinitely many numbers not divisible by $p$.
Let a {\em{$p$-block}} be a maximal contiguous subsequence of the sequence $C_n$ consisting of entries divisible by $p$.
The $p$-blocks can be naturally ordered along the sequence $C_n$, so let $L_k^p$ be the length of the $k$th $p$-block.
Then Alter and Kubota proved the following.

\begin{theorem}[\cite{klapki}]\label{thm:klapki}
 For every prime $p>3$ and $k\geq 1$, we have
 $$L_k^p=\frac{p^{m+1}-3}{2},$$
 where $m$ is the largest integer such that $\left(\frac{p+1}{2}\right)^m$ divides $k$.
\end{theorem}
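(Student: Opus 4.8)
The plan is to reduce the combinatorics of $p$-divisibility of $C_n$ to a statement about base-$p$ digit patterns, and then to count the lengths of the resulting runs. Write $q = \frac{p-1}{2}$, so that $p = 2q+1$ and the quantity $\frac{p+1}{2}$ appearing in the statement equals $q+1$. The starting point is the identity $v_p(C_n) = v_p\binom{2n}{n} - v_p(n+1)$, where $v_p$ denotes the $p$-adic valuation, combined with Kummer's theorem: $v_p\binom{2n}{n}$ equals the number of carries produced when adding $n+n$ in base $p$. My first goal is to turn this into a clean characterisation of the indices $n$ with $C_n\not\equiv 0 \bmod p$.

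To this end, let $j$ be the number of trailing digits of $n$ equal to $p-1$ in base $p$; equivalently $j = v_p(n+1)$. Each of these $j$ lowest positions forces a carry, contributing exactly $j$ to the carry count, so $v_p(C_n)\ge 0$ as it must, and $C_n\not\equiv 0$ holds precisely when no further carries occur at positions $\ge j$. Carrying out this analysis position by position (the position-$j$ case depending on whether $j=0$ or $j\ge 1$, since the incoming carry differs) and then re-expressing everything through the digits $b_i$ of $N := n+1$ rather than those of $n$, I expect to obtain the following crisp criterion: $C_n\not\equiv 0 \bmod p$ if and only if the base-$p$ representation of $N = n+1$ has units digit at most $q+1$ and every higher digit at most $q$. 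I view establishing this equivalence as the main obstacle: the carry bookkeeping is delicate, the boundary position $j$ must be treated with care, and the passage from the digits of $n$ to those of $n+1$ through the borrow cascade is where mistakes are easy.

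Call a number \emph{good} if all its base-$p$ digits are at most $q$, and \emph{valid} if it additionally satisfies only the weaker units-digit bound $\le q+1$; thus the criterion says $n+1$ is valid. A $p$-block of $C_n$ is then exactly a maximal run of consecutive integers $N$ that are invalid, shifted by $-1$. A short increment argument shows that a run of invalid numbers begins immediately after a valid $N$ whose units digit is exactly $q+1$, that is, after $N = pH + (q+1)$ with $H$ good, and that the next valid number is $pH'$, where $H'$ is the least good number exceeding $H$. Hence the block length equals $pH' - N - 1 = p(H'-H) - (q+2)$.

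It remains to compute $H'-H$ and to index the blocks by $k$. Good numbers, read in base $p$ with digits restricted to $\{0,\dots,q\}$, are order-isomorphic to $\N$ via base-$(q+1)$ representation, so if $H$ ends in $t$ digits equal to $q$ then a direct computation using $p-1 = 2q$ gives $H'-H = \frac{p^t+1}{2}$, whence the block length simplifies to $p\cdot\frac{p^t+1}{2} - (q+2) = \frac{p^{t+1}-3}{2}$. Finally, the $k$-th block corresponds to $H$ being the $(k-1)$-th good number, and $t$ is the number of trailing $q$-digits of $k-1$ in base $q+1$. Since subtracting $1$ turns exactly the trailing zeros of $k$ into the maximal digit $q$, this $t$ equals the number of trailing zeros of $k$ in base $q+1$, i.e. the largest $m$ with $(q+1)^m \mid k$. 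Substituting $t=m$ gives $L_k^p = \frac{p^{m+1}-3}{2}$, as required.
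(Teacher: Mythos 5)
Your proof is correct, but a comparison with ``the paper's own proof'' is moot here: the paper does not prove this statement at all --- it is imported verbatim from Alter and Kubota \cite{klapki} and used as a black box to derive Corollary~\ref{cor:catalan}. What you have supplied is a self-contained derivation, and it checks out. The digit criterion you isolate ($C_n\not\equiv 0 \bmod p$ iff $N=n+1$ has units digit at most $q+1$ and all higher base-$p$ digits at most $q$, with $q=\frac{p-1}{2}$) is exactly right: with $j=v_p(n+1)$ the first $j$ positions of the addition $n+n$ always carry, and the no-further-carry condition unwinds to $a_0\le q$ (all higher $a_i\le q$) when $j=0$ and to $a_j\le q-1$ (all higher $a_i\le q$) when $j\ge 1$, which after the borrow cascade is precisely your criterion on the digits of $N$. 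The run-length bookkeeping is also correct: the invalid run after $N=pH+(q+1)$ is $pH+q+2,\ldots,pH'-1$ of length $p(H'-H)-(q+2)$; the base-$(q+1)$ order isomorphism gives $H'-H=p^t-q\cdot\frac{p^t-1}{p-1}=\frac{p^t+1}{2}$; and the trailing-digit argument correctly converts $t$ into the largest $m$ with $\left(\frac{p+1}{2}\right)^m$ dividing $k$. This route, via Kummer's theorem, is the standard modern streamlining (in the spirit of Deutsch--Sagan's work on Catalan congruences) and is arguably cleaner than the original 1973 argument. One point you should make explicit: the hypothesis $p>3$ is silently used when you assert that a run of invalid numbers actually begins after every $N=pH+(q+1)$ --- the gap has length $p(H'-H)-(q+2)\ge p-q-2=q-1$, which is nonempty only for $q\ge 2$, i.e.\ $p\ge 5$; for $p=3$ consecutive gaps with $H'=H+1$ are empty, adjacent valid runs merge, and both the block indexing and the formula fail, which is exactly why the theorem excludes $p=3$.
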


Note that Theorem~\ref{thm:klapki} in particular implies that for every prime $p>3$, the sequence $C_n$ contains arbitrary long $p$-blocks.
This means that $C_n$ taken modulo $p$ cannot be ultimately periodic. By combining this with Theorem~\ref{thm:periodic}, we conclude the following.

\begin{corollary}\label{cor:catalan}
 Catalan numbers are not poly-recursive.
\end{corollary}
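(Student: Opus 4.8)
The plan is to derive a contradiction by combining the modular periodicity property of poly-recursive sequences (Theorem~\ref{thm:periodic}) with the precise description of the divisibility pattern of Catalan numbers (Theorem~\ref{thm:klapki}). Concretely, I would suppose towards a contradiction that the sequence $\langle C_n\rangle_{n\in\N}$ is poly-recursive. Since the Catalan numbers are integers, Theorem~\ref{thm:periodic} applies and yields a threshold $a\in\N$ such that for every prime $p>a$ the sequence $r_n\coloneqq C_n\bmod p$ is ultimately periodic.

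Next I would fix a single prime $p$ with $p>\max(a,3)$, so that both Theorem~\ref{thm:periodic} and Theorem~\ref{thm:klapki} apply to it. For this $p$ a $p$-block is exactly a maximal run of consecutive indices $n$ with $r_n=0$, and the goal is to show that these runs of zeros are arbitrarily long. To this end I would exhibit, for every $M\in\N$, a value of $k$ making the length $L_k^p$ large: taking $k=\left(\frac{p+1}{2}\right)^M$, the largest $m$ with $\left(\frac{p+1}{2}\right)^m$ dividing $k$ satisfies $m\geq M$, so Theorem~\ref{thm:klapki} gives $L_k^p=\frac{p^{m+1}-3}{2}\geq\frac{p^{M+1}-3}{2}$. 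Thus the $p$-blocks have unbounded length, and the blocks witnessing these large lengths have large index $k$, hence start arbitrarily far out along the sequence.

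It then remains to observe that an ultimately periodic integer sequence cannot contain arbitrarily long runs of a fixed value unless it is eventually constant. Suppose $r_n=r_{n+q}$ for all $n\geq N$. Choosing a $p$-block of length greater than $q$ that starts beyond $N$, there is an index $n\geq N$ with $r_n=r_{n+1}=\dots=r_{n+q-1}=0$; propagating the period forward then forces $r_m=0$ for all $m\geq n$, so $r$ is eventually identically zero. This contradicts the fact (recorded above and proved in~\cite{klapki}) that infinitely many Catalan numbers are not divisible by $p$, and the contradiction shows that $\langle C_n\rangle_{n\in\N}$ is not poly-recursive.

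I do not expect any genuine obstacle here, since the two substantive ingredients are supplied by Theorems~\ref{thm:periodic} and~\ref{thm:klapki}. The only step requiring a little care is the last one, namely making precise why unbounded runs of zeros are incompatible with ultimate periodicity: one must rule out the degenerate possibility of an eventually-zero sequence, which is exactly where the fact that $p\nmid C_n$ for infinitely many $n$ is used.
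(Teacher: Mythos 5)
Your proposal is correct and follows exactly the paper's route: the paper likewise combines Theorem~\ref{thm:periodic} with Theorem~\ref{thm:klapki}, observing that arbitrarily long $p$-blocks preclude ultimate periodicity of $C_n \bmod p$ for primes $p>\max(a,3)$. Your write-up merely makes explicit the details the paper leaves implicit (the choice $k=\left(\frac{p+1}{2}\right)^M$, the fact that long blocks occur beyond the periodicity threshold, and the use of infinitely many indices with $p\nmid C_n$ to exclude the eventually-zero case), all of which are handled correctly.
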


\section{Cancelling polynomials}
\label{sec:cancelling}
Consider the following definition, which can be seen as a variation of the definition of simple poly-recursive sequences, which we discussed in Section~\ref{sec:simple}.

\begin{definition}
  A sequence of rationals $\bu$ admits a \emph{cancelling polynomial} if
  there exist $k\in \N$ and a nonzero polynomial \(P \in \bbQ[x_0, \ldots, x_k]\) such that
\[P\left(u_{n}, u_{n+1}, \ldots, u_{n+k} \right) = 0\qquad \textrm{for all }n \in \bbN.\]
\end{definition}

\begin{remark}\label{rem:cancelling}
A cancelling polynomial $P$ can be always assumed to have integer coefficients, i.e. to belong to $\bbZ[x_0,\ldots,x_k]$, because
one may multiply $P$ by the product of all denominators that occur in its coefficients.
\end{remark}

Observe that the notion of a cancelling polynomial extends the definition of simple poly-recursive sequences (Definition~\ref{def:simple-prs}) in the following sense:
a sequence is simple poly-recursive if and only if it admits a cancelling polynomial $P(x_0,\ldots,x_k)$ whose expansion into a sum of monomials involves only one term containing $x_k$, namely the monomial $x_k$ itself.
This particular form of the considered algebraic constraint was vitally used in the proof of Proposition~\ref{prop:factorial}, where we showed that the sequence $b_n=n!$ is not simple poly-recursive.
In fact, if one drops this restriction, then it is easy to see that the sequence $b_n=n!$ actually admits a cancelling polynomial: for instance $P(x_0,x_1,x_2) = x_0x_2 - (x_1)^2 - x_0x_1$.




We now prove that the above example is not a coincidence.

\begin{theorem}\label{thm:cancelling}
Every poly-recursive sequence admits a cancelling polynomial.
\end{theorem}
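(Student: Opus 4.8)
The plan is to read the system defining $\bu$ as a single polynomial dynamical system on $\bbQ^k$, observe that each shifted value $u_{n+j}$ is a fixed polynomial function of the state of this system at time $n$, and then exploit the fact that there are too many such functions (namely $k+1$ of them) in too few variables (namely $k$) for them to be algebraically independent. The algebraic relation witnessing their dependence will be exactly the cancelling polynomial we seek.

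First I would package Definition~\ref{deff:prs} into one map. Writing $F=(P_1,\ldots,P_k)\colon \bbQ^k\to\bbQ^k$ and $\vec{u}_n=(u^1_n,\ldots,u^k_n)^{\transpose}$, the system~\eqref{def:prs} says exactly that $\vec{u}_{n+1}=F(\vec{u}_n)$, and hence $\vec{u}_{n+j}=F^j(\vec{u}_n)$ for all $j,n\in\bbN$, where $F^j$ denotes the $j$-fold composition. Since a composition of polynomial maps is again polynomial, the first coordinate of $F^j$ is a polynomial $Q_j\in\bbQ[x_1,\ldots,x_k]$; here $Q_0=x_1$ and $Q_1=P_1$. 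The crucial identity is then
\[
u_{n+j}=u^1_{n+j}=Q_j(\vec{u}_n)\qquad\textrm{for all }n,j\in\bbN,
\]
which records that every window $(u_n,u_{n+1},\ldots,u_{n+k})$ is the image of the single point $\vec{u}_n$ under the polynomial map $(Q_0,\ldots,Q_k)$.

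The heart of the argument is the following counting-by-dimension step. Regard $Q_0,Q_1,\ldots,Q_k$ as $k+1$ elements of the field $\bbQ(x_1,\ldots,x_k)$, which has transcendence degree $k$ over $\bbQ$. Any $k+1$ elements of a field of transcendence degree $k$ over $\bbQ$ are algebraically dependent over $\bbQ$; by definition of algebraic dependence, this means there is a \emph{nonzero} polynomial $P\in\bbQ[x_0,\ldots,x_k]$ such that
\[
P\bigl(Q_0,Q_1,\ldots,Q_k\bigr)=0\qquad\textrm{as an identity in }\bbQ[x_1,\ldots,x_k].
\]
Finally I would specialize this polynomial identity: evaluating it at the point $\vec{u}_n\in\bbQ^k$ for each $n$ and using $u_{n+j}=Q_j(\vec{u}_n)$ gives $P(u_n,u_{n+1},\ldots,u_{n+k})=0$ for all $n\in\bbN$, so $P$ is a cancelling polynomial for $\bu$ (and by Remark~\ref{rem:cancelling} its coefficients may be cleared to $\bbZ$).

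I expect essentially all the content to sit in the transcendence-degree step; the rest is bookkeeping. The two points I would take care to state cleanly are that the $Q_j$ really are honest polynomials (immediate from closure of $\bbQ[x_1,\ldots,x_k]$ under composition) and that algebraic dependence over $\bbQ$ is precisely the existence of a nonzero $P\in\bbQ[x_0,\ldots,x_k]$ annihilating the $Q_j$ — the latter is what converts the soft dimension count into a concrete polynomial relation with rational coefficients. No asymptotics, primality, or divisibility arguments are needed, in contrast to the proofs for $n!$ and the Catalan numbers.
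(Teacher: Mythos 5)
Your proposal is correct and follows essentially the same route as the paper's proof: your iterated first coordinates $Q_j$ are exactly the paper's polynomials $P^{(0)}_1,\ldots,P^{(k)}_1$, and your transcendence-degree count is precisely the paper's Claim~\ref{cl:independent} (that any $k+1$ elements of $\bbQ(x_1,\ldots,x_k)$ are algebraically dependent over $\bbQ$), followed by the same specialization at $\vec{u}_n$. No gaps.
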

\begin{proof}
The proof follows the same basic idea as the proof of Proposition~\ref{prop:lrs} that we gave in Section~\ref{sec:preliminaries}.
The difference is that instead of linear maps we work with maps defined by polynomial functions, hence instead of linear independence we shall work with the notion of algebraic independence.

Recall that if $\bbK\subseteq \bbL$ is a field extension, then elements $a_1,\ldots,a_k\in \bbL$ are {\em{algebraically dependent}} over $\bbK$ if there is a nonzero polynomial $P\in \bbK[x_1,\ldots,x_k]$ such that
$P(a_1,\ldots,a_k)=0$ in $\bbL$. We will use the following well-known fact; see e.g.~\cite[Chapter~VIII, Theorem~1.1]{Lang}.

\begin{claim}\label{cl:independent}
 If $\bbK$ is a field and $k\in \mathbb{N}$, then in the field of rational expressions $\bbK(x_1,\ldots,x_k)$ every $k+1$ elements are algebraically dependent over $\bbK$.
\end{claim}

We proceed to the proof of the theorem. Let $\bu$ be the poly-recursive sequence in question. By definition, for some $k\in \mathbb{N}$ there are sequences $\bu^1,\ldots,\bu^k$ and polynomials $P_1,\ldots,P_k\in \bbQ[x_1,\ldots,x_k]$
such that for all $n\in \mathbb{N}$, 
\begin{align*}
\begin{cases}
u^1_{n+1} = P_1(u^1_n,\ldots,u^k_n),\\
\vdots \\
u^k_{n+1} = P_k(u^1_n,\ldots,u^k_n).
\end{cases}
\end{align*}
We inductively define polynomials $P^{(t)}_1,\ldots,P^{(t)}_k\in \bbQ[x_1,\ldots,x_k]$ as follows.
For $t=0$, set
\begin{align*}
 P^{(0)}_i(x_1,\ldots,x_k) = x_i\qquad \textrm{for all }i\in \{1,\ldots,k\},
\end{align*}
and for $t\geq 1$, set
\begin{align*}
 P^{(t)}_i(x_1,\ldots,x_k) = P_i(P^{(t-1)}_1(x_1,\ldots,x_k),\ldots,P^{(t-1)}_k(x_1,\ldots,x_k))\quad \textrm{for all }i\in \{1,\ldots,k\}.
\end{align*}
The following claim follows from the construction by a straightforward induction.

\begin{claim}\label{cl:iteration}
 For all $n,t\in \mathbb{N}$ and $i\in \{1,\ldots,k\}$, we have $P^{(t)}_i(u^1_n,\ldots,u^k_n)=u^i_{n+t}$.
\end{claim}

Consider the polynomials 
\begin{align*}
P^{(0)}_1,P^{(1)}_1,\ldots,P^{(k)}_1\in \bbQ[x_1,\ldots,x_k].
\end{align*}
By Claim~\ref{cl:independent}, these polynomials (treated as elements of $\bbQ(x_1,\ldots,x_k)$) are algebraically dependent over $\bbQ$, so there exists a nonzero polynomial $Q\in \bbQ[y_0,y_1,\ldots,y_k]$ such that
the polynomial
$$R(x_1,\ldots,x_k)=Q(P^{(0)}_1(x_1,\ldots,x_k),P^{(1)}_1(x_1,\ldots,x_k),\ldots,P^{(k)}_1(x_1,\ldots,x_k))$$
is identically zero. It now remains to observe that by Claim~\ref{cl:iteration}, for every $n\in \mathbb{N}$ we have
$$0=R(u^1_n,\ldots,u^k_n)=Q(u^1_n,u^1_{n+1},\ldots,u^1_{n+k})=Q(u_n,u_{n+1},\ldots,u_{n+k}),$$
hence $Q$ is a cancelling polynomial for $\bu$.
\end{proof}

\begin{remark}
Notice that a given polynomial can be the cancelling polynomial of many different sequences.
For example, the polynomial $(x_0)^2 - 1$ is a cancelling polynomial of any sequence over $\set{-1,1}$. 
In particular, some of those sequences are not ultimately periodic modulo $p$, for any prime numbers $p$, and thus are not poly-recursive by Theorem~\ref{thm:periodic}. 
Hence, the converse direction of Theorem~\ref{thm:cancelling} does not hold.
\end{remark}

We now present an application of Theorem~\ref{thm:cancelling} by showing that
the sequence $u_n=n^n$ is not poly-recursive.
By Theorem~\ref{thm:cancelling}, it suffices to show that there is no cancelling polynomial for this sequence.
Contrary to the reasoning presented in Section~\ref{sec:periodicity}, where we used off-the-shelf results about modular (non)periodicity of Catalan numbers,
proving the nonexistence of a cancelling polynomial for the $n^n$ sequence turns out to be a somewhat challenging task.

We first observe that when we apply a multivariate polynomial to consecutive entries of $u_n$, we can rewrite the result in another form:

\begin{lemma}\label{lem:zero-sum}
 Let $Z\in \bbZ[x_0,x_1,\ldots,x_k]$ be a nonzero polynomial.
 Then there exist nonzero polynomials $P_1,\ldots,P_m,Q_1,\ldots,Q_m\in \bbZ[x]$ such that
 the polynomials $P_1,\ldots,P_m$ are pairwise different and for every $n\in \bbN$, 
 \[Z\left(n^n,(n+1)^{n+1},\ldots,(n+k)^{n+k}\right) = \sum_{i=1}^m P_i(n)^n\cdot Q_i(n).\]
\end{lemma}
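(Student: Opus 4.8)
The plan is to prove the identity by brute substitution, expanding $Z$ into monomials and then applying one elementary factorisation of exponents to each term. Write $Z = \sum_{i} c_i \prod_{j=0}^{k} x_j^{e_j^{(i)}}$, where the coefficients $c_i \in \bbZ$ are nonzero and the exponent vectors $\mathbf{e}^{(i)} = (e_0^{(i)},\ldots,e_k^{(i)})$ are pairwise distinct, as is automatic in a monomial expansion. First I would substitute $x_j \mapsto (n+j)^{n+j}$, which turns the $i$th monomial into $c_i \prod_{j=0}^{k} (n+j)^{(n+j)\,e_j^{(i)}}$.

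The key step is the splitting $(n+j)\,e_j = n\cdot e_j + j\cdot e_j$, which gives $(n+j)^{(n+j)e_j} = \bigl((n+j)^{e_j}\bigr)^n \cdot (n+j)^{j e_j}$. Collecting this over all $j$, the $i$th monomial becomes $P_i(n)^n \cdot Q_i(n)$, where I set $P_i(x) = \prod_{j=0}^{k} (x+j)^{e_j^{(i)}}$ and $Q_i(x) = c_i \prod_{j=0}^{k} (x+j)^{j e_j^{(i)}}$. Both polynomials lie in $\bbZ[x]$ and are nonzero, being a product of powers of nonzero linear factors (times a nonzero integer in the case of $Q_i$); summing over $i$ then yields exactly $\sum_i P_i(n)^n Q_i(n)$, as required.

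It remains to check that the $P_i$ are pairwise different, and this is the only point requiring genuine care. It follows from unique factorisation in $\bbQ[x]$: the polynomial $P_i(x) = \prod_{j=0}^{k}(x+j)^{e_j^{(i)}}$ has $-j$ as a root of multiplicity exactly $e_j^{(i)}$, and since $0,-1,\ldots,-k$ are distinct, $P_i$ determines the entire vector $\mathbf{e}^{(i)}$. Hence distinct monomials, having distinct exponent vectors, produce distinct polynomials $P_i$, so no two terms ever need to be merged.

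The main obstacle is therefore not computational but a matter of bookkeeping: arranging that pairwise-distinctness of the $P_i$ and nonvanishing of the $Q_i$ hold at the same time. The observation above resolves this cleanly, because distinctness of the $P_i$ means we may keep one summand per monomial without grouping, and it is precisely this avoidance of grouping that prevents any cancellation from forcing some $Q_i$ to be zero. The boundary case $n=0$ is harmless under the convention $0^0 = 1$, since then $P_i(0)^0 = 1$ for every $i$ regardless of the value of $P_i(0)$.
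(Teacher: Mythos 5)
Your proof is correct and takes essentially the same route as the paper's: expand $Z$ into monomials, split each exponent via $(n+j)e_j = n\,e_j + j\,e_j$ to obtain the identical definitions $P_i(x)=\prod_{j=0}^k (x+j)^{e_j^{(i)}}$ and $Q_i(x)=c_i\prod_{j=0}^k (x+j)^{j e_j^{(i)}}$, and deduce pairwise distinctness of the $P_i$ from the multiplicities of the roots $0,-1,\ldots,-k$ (the paper phrases this as the $P_i$ having pairwise different multisets of roots). Your extra remarks on avoiding grouping of terms and on the $n=0$ boundary case are harmless elaborations of the same argument.
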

\begin{proof}
  By expanding $Z$ as a sum of monomials, we may write
  \begin{align}\label{eq:raccoon}
    Z(x_0,\ldots,x_k) = \sum_{i = 1}^{m} c_i\cdot M_i(x_0,\ldots,x_j),   
  \end{align}
  where for all $i\in \{1,\ldots,m\}$, $c_i\neq 0$ and
  \[
    M_i(x_0,\ldots,x_k) = \prod_{j=0}^{k} x_j^{d_{i,j}}   
  \]
  are pairwise different monomials.
  Now observe that for every $n\in \bbN$, we have
  \begin{align}
    M_i\left(n^n,(n+1)^{n+1},\ldots,(n+k)^{n+k}\right) & = \prod_{j=0}^k (n+j)^{d_{i,j}\cdot (n+j)}\nonumber\\
    & = \left(\prod_{j=0}^k (n+j)^{d_{i,j}}\right)^n\cdot \prod_{j=0}^k (n+j)^{d_{i,j}\cdot j}.\label{eq:beaver}
  \end{align}
  Hence, if we define
  \[P_i(x)=\prod_{j=0}^k (x+j)^{d_{i,j}}\qquad\textrm{and}\qquad Q_i(x)=c_i\cdot \prod_{j=0}^k (x+j)^{d_{i,j}\cdot j},\]
  then, by~\eqref{eq:raccoon} and~\eqref{eq:beaver}, we conclude that 
  \[Z\left(n^n,(n+1)^{n+1},\ldots,(n+k)^{n+k}\right) = \sum_{i=1}^m P_i(n)^n\cdot
  Q_i(n)\qquad\textrm{for all }n\in \bbN,\] as required. It now suffices to observe that
  (1) all polynomials $P_i$ and $Q_i$ are nonzero, because $c_i\neq 0$ and the
  monomial $M_i$ is nonzero, and (2) the polynomials $P_i$ are pairwise
  different, because they have pairwise different multisets of roots.
\end{proof}

With \Cref{lem:zero-sum} established, we move to the main result of this section.

\begin{theorem}\label{thm:nton}
  The sequence $u_n=n^n$ is not poly-recursive.  
\end{theorem}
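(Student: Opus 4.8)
The plan is to combine Theorem~\ref{thm:cancelling} with Lemma~\ref{lem:zero-sum}. By Theorem~\ref{thm:cancelling} and Remark~\ref{rem:cancelling}, it suffices to prove that $\bu=\langle n^n\rangle_{n\in\bbN}$ admits no nonzero cancelling polynomial $Z\in\bbZ[x_0,\ldots,x_k]$; that is, for every such $Z$ the sequence $n\mapsto Z(n^n,\ldots,(n+k)^{n+k})$ is not identically zero. Applying Lemma~\ref{lem:zero-sum} to $Z$, this reduces to showing that a sum
\[\sum_{i=1}^m Q_i(n)\,P_i(n)^n\]
cannot vanish for all $n\in\bbN$, where $m\ge 1$, the $Q_i\in\bbZ[x]$ are nonzero, and the $P_i(x)=\prod_{j=0}^k (x+j)^{d_{i,j}}$ are pairwise distinct \emph{monic} polynomials with roots in $\{0,-1,\ldots,-k\}$. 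In fact I would prove the stronger statement that this sum is eventually nonzero.

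The functions $n\mapsto P_i(n)^n=\exp\!\big(n\ln P_i(n)\big)$ should be viewed as generalised exponentials, and the goal becomes their linear independence over the field $\mathbb{R}(n)$ of rational functions: once this is known, nonzero polynomial coefficients $Q_i$ force the combination to be a nonzero germ at $+\infty$. To see that the $P_i$ differ genuinely at the level of these exponentials, I would record the asymptotic expansion
\[n\ln P_i(n)=p_i^{(0)}\,n\ln n+\sum_{\ell\ge 1}\frac{(-1)^{\ell-1}}{\ell}\,p_i^{(\ell)}\,n^{1-\ell},\qquad p_i^{(\ell)}=\sum_{j=0}^k d_{i,j}\,j^\ell.\]
Since the nodes $0,1,\ldots,k$ are distinct, the Vandermonde matrix $(j^\ell)_{\ell,j}$ is invertible, so the tuple $\big(p_i^{(0)},\ldots,p_i^{(k)}\big)$ determines and is determined by the exponent vector $(d_{i,0},\ldots,d_{i,k})$. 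Hence distinct $P_i$ have pairwise distinct such signatures.

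To turn distinctness of signatures into linear independence, I would work in the differential field $\mathcal{L}=\mathbb{R}(n)\big(\ln n,\ln(n+1),\ldots,\ln(n+k)\big)$, whose field of constants is $\mathbb{R}$ and over which each $y_i:=P_i(n)^n$ is hyperexponential, with logarithmic derivative $a_i=y_i'/y_i=\ln P_i(n)+n\,P_i'(n)/P_i(n)\in\mathcal{L}$. By the standard criterion for independence of hyperexponential elements, it suffices to check that for $i\ne i'$ the difference $a_i-a_{i'}$ is not the logarithmic derivative $u'/u$ of any $u\in\mathcal{L}^\times$ (otherwise $y_i/y_{i'}\in\mathcal{L}$, a dependence). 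Here lies the main obstacle: $a_i-a_{i'}=\ln(P_i/P_{i'})+(\text{rational})$, and $\ln(P_i/P_{i'})=\sum_{j=0}^k (d_{i,j}-d_{i',j})\ln(n+j)$ is a nonzero constant-coefficient combination of the logarithms (nonzero precisely because $P_i\ne P_{i'}$), whereas a logarithmic derivative in $\mathcal{L}$ can never exhibit such a term — differentiation makes the logarithms appear only in denominators, rather than producing a linear log-part. Making this impossibility rigorous is the technical heart. Granting it, the $y_i$ are independent over $\mathcal{L}\supseteq\mathbb{R}(n)$, so $\sum_i Q_i(n)\,P_i(n)^n$ is a nonzero germ; since all functions involved lie in a Hardy field (e.g.\ the germs definable in the o-minimal structure $\mathbb{R}_{\exp}$), a nonzero germ is eventually of constant sign, hence eventually nonzero, completing the argument.

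It is worth saying where the difficulty concentrates in a more hands-on asymptotic treatment. The sheer magnitude of the terms separates indices only according to $\big(p_i^{(0)},p_i^{(1)}\big)$: when these agree one has $P_i(n)^n/P_{i'}(n)^n\to 1$, so no single term dominates and cancellation among a \emph{cluster} of comparable terms is possible. Resolving such clusters requires descending to the finer coefficients $p_i^{(\ell)}$ for $\ell\ge 2$ — exactly the information that the logarithmic-derivative computation above encodes — which is why I expect this step, rather than the reductions, to be the crux.
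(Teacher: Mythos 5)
Your reductions coincide exactly with the paper's: both invoke \Cref{thm:cancelling} with Remark~\ref{rem:cancelling} and then \Cref{lem:zero-sum} to reduce everything to showing that $\sum_{i=1}^m Q_i(n)P_i(n)^n$ cannot vanish for all $n$. After that you diverge genuinely. The paper stays elementary and number-theoretic: using the Chinese remainder theorem and Fermat's little theorem it \emph{decouples} base and exponent modulo a prime, obtaining $\sum_i P_i(a)^b\,Q_i(a)\equiv 0 \bmod p$ for all $a$ and all $b>0$; stacking $b=1,\ldots,m$ gives a Vandermonde-type matrix $D_a=[P_j(a)^i]$, multiplication by the adjugate yields $S(a)\,Q_i(a)\equiv 0\bmod p$ for the nonzero polynomial $S=\prod_i P_i\cdot\prod_{i<j}(P_i-P_j)$, and a single prime $p$ exceeding the coefficients and $\deg S+\max_j\deg Q_j$ gives a contradiction by counting roots over $\bbF_p$. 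Your route instead treats the $P_i(n)^n$ as hyperexponential germs and aims at linear independence over $\mathcal{L}=\mathbb{R}(x)\bigl(\ln x,\ldots,\ln(x+k)\bigr)$, finishing with the Hardy-field sign dichotomy for $\mathbb{R}_{\exp}$-definable germs. What your approach buys is a strictly stronger conclusion (the sum is \emph{eventually nonzero} as a function of a real variable, not merely not identically zero on $\bbN$) and a framework insensitive to the arithmetic of the $P_i$; what it costs is heavy machinery (Kolchin--Ostrowski, the independence criterion for hyperexponentials in no-new-constants extensions, o-minimality of $\mathbb{R}_{\exp}$) where the paper's argument is self-contained and short. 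Your asymptotic-expansion paragraph with the signatures $p_i^{(\ell)}$ is correct but serves only as motivation; all the real content sits in the differential-algebra step.

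That step --- no $u\in\mathcal{L}^\times$ has $u'/u=\sum_j c_j\ln(x+j)+r(x)$ with $(c_j)\neq 0$ --- you explicitly grant rather than prove, so as submitted the proposal is a blueprint, not a proof. The good news is that the granted lemma is true and closable by standard means. First, the $\ln(x+j)$ are algebraically independent over $\mathbb{R}(x)$: a relation $\sum_j c_j\ln(x+j)=r(x)$ differentiates to $\sum_j c_j/(x+j)=r'(x)$, which is impossible for $(c_j)\neq 0$ because the derivative of a rational function has zero residue at every pole; the Kolchin--Ostrowski theorem then upgrades this linear independence (modulo $\mathbb{R}(x)$) to algebraic independence. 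Hence $\mathcal{L}$ is a rational function field in the indeterminates $\ell_j=\ln(x+j)$ over $\mathbb{R}(x)$, and since $\ell_j'=1/(x+j)\in\mathbb{R}(x)$, differentiation does not raise $\ell$-degree. Now factor $u=c\prod_t f_t^{e_t}$ into irreducibles of the UFD $\mathbb{R}(x)[\ell_0,\ldots,\ell_k]$ and write $f_t'=q_tf_t+r_t$ with $\deg_\ell r_t<\deg_\ell f_t$; the degree bound forces $q_t\in\mathbb{R}(x)$, and uniqueness of partial fractions forces every $r_t=0$ if $u'/u=c'/c+\sum_t e_tf_t'/f_t$ is to equal a polynomial of $\ell$-degree $1$. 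But then $u'/u=c'/c+\sum_t e_tq_t\in\mathbb{R}(x)$, contradicting $(c_j)\neq 0$ by the independence just established. Combined with the standard lemma that hyperexponential elements whose pairwise ratios have logarithmic derivatives that are not logarithmic derivatives in the base field are linearly independent over any subfield with the same constants (here $\mathbb{R}$, both for $\mathcal{L}$ and for the ambient Hardy field), your outline does become a complete, genuinely different proof of \Cref{thm:nton}; you should also note that the relevant statement is eventual nonvanishing, which suffices because the cancelling polynomial forces vanishing at all large integers.
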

\begin{proof}
    Suppose, for the sake of contradiction, that the sequence $u_n=n^n$ is poly-recursive.
    By \Cref{thm:cancelling} and Remark~\ref{rem:cancelling}, there exists a nonzero polynomial $Z\in \bbZ[x_0,x_1,\ldots,x_k]$ that is cancelling for $u_n$.
    By \Cref{lem:zero-sum}, we can find nonzero polynomials $P_1,\ldots,P_m,Q_1,\ldots,Q_m\in \bbZ[x]$, where $P_1,\ldots,P_m$ are pairwise different, such that
    \begin{align}\label{eq:badger}\sum_{i=1}^m P_i(n)^n\cdot Q_i(n) = 0 \qquad\textrm{for all }n\in \bbN.\end{align}
    This system of equations seems somewhat unwieldy due to the presence of the term $P_i(n)^n$, where $n$ is involved both in the base and in the exponent.
    The following claim formulates the key idea of the proof: if we consider the equations~\eqref{eq:badger} modulo any prime, then the bases and the exponents of these terms can be made independent.

    \begin{claim}\label{cl:system-modp}
      For every prime $p$ and all \(a,b\in \bbZ\) where $b>0$, it holds that
      \[\sum_{i=1}^m P_i(a)^b\cdot Q_i(a) \equiv 0 \mod p\enspace.\]
    \end{claim}
    \begin{claimproof}
      Since \(p\) and \(p-1\) are coprime, there is an \(n \in \bbZ\) such that
      \(n\equiv a\bmod p\) and \(n\equiv b\bmod p-1\).  Thus for any \(1 \le i \le m\):
      \[Q_i(n)\equiv Q_i(a)\mod p \qquad\text{and}\qquad P_i(n)^n\equiv P_i(a)^n\equiv P_i(a)^b\mod
      p\enspace,\]%
      the second part holding by Fermat's Little Theorem. The claim now follows by considering equality~\eqref{eq:badger} modulo $p$.
    \end{claimproof}
    
  Let \(a \in \bbN\) and let \(D_a=[d_{ij}]_{1\leq i,j\leq m}\) be the
  \(m\times m\) matrix defined by \(d_{ij} = P_j(a)^i\). Since this is essentially a Vandermonde matrix, its determinant has a simple expression, 
  as expressed in the following claim.
  
    \begin{claim}\label{cl:det-Da}
      Let \(S \in \bbZ[x]\) be defined as
      \[S(x) = \prod_{i=1}^m P_i(x)\cdot \prod_{1\leq i<j\leq m} (P_i(x)-P_j(x))\enspace.\] Then \(S\) is nonzero and \(\det (D_a) = S(a)\).
    \end{claim}
    \begin{claimproof}
      That \(S\) is nonzero follows from the fact that the polynomials \(P_i\) are
      all nonzero and pairwise different.

      Now observe that \(D_a\) is a Vandermonde matrix with columns consisting
      of consecutive powers of \(P_j(a)\), for \(1 \le j \le m\), where additionally every
      \(j\)th column is multiplied by \(P_j(a)\).  It is well known that the
      determinant of the Vandermonde matrix \([P_j(a)^{i-1}]_{1\leq i,j\leq m}\) is
      \[
      \prod_{1\leq i<j\leq m} (P_i(a)-P_j(a))\enspace.
      \]
      Further, multiplying the $j$th column by \(P_j(a)\), for all $j$, results in the determinant
      being multiplied by \(\prod_{i=1}^m P_i(a)\).  This proves the claim.
    \end{claimproof}

  We will need the following classical definition.
  \begin{definition}\label{cl:adjugate}
    Let \(R\) be a ring and \(M\) be a \(m\times m\) matrix over \(R\).  The \emph{adjugate
      matrix} \(\wh{M}\) of~\(M\) is the \(m\times m\) matrix over \(R\) that satisfies
    \(\wh{M} M = \det(M)\cdot I\), where \(I\) is the \(m\times m\) identity matrix.
  \end{definition}
  It is well known that the adjugate matrix always exists.
  Now let \(u_a=(Q_1(a),\ldots,Q_m(a))^{\transpose}\).  \Cref{cl:system-modp} implies that for every
  prime \(p\), \[D_a u_a \equiv \vec{0} \mod p,\] where $\vec{0}$ is the $m$-dimensional zero vector.  By multiplying both sides of this
  equation by the adjugate matrix of \(D_a\) taken over \(\bbZ_p\), we conclude that
  for every prime \(p\), we have
  \[\det(D_a)\cdot u_a \equiv \vec{0} \mod p\qquad\textrm{for all }a\in \bbN.\]
  This is equivalent to
  \begin{align}\label{eq:final}
  S(a)\cdot Q_i(a)\equiv 0 \mod p\qquad \text{for all }a\in \bbN\text{ and
  }1\le i \le m.
  \end{align}
  This means that for every prime \(p\) and every \(1 \le i \le m\), the following assertion holds: every
  \(a\in \bbF_p\) is a zero of the polynomial \(S\cdot Q_i\) considered as a polynomial over \(\bbF_p\).
  
  Recall that the polynomials $S,Q_1,\ldots,Q_m\in \bbZ[x]$ are nonzero.
  Consider a prime \(p\) that is larger than every coefficient occurring in the
  expansion of the polynomials \(S\),
  \(Q_1, \ldots, Q_m\) into sums of monomials, and that is further larger than
  \(\deg(S) + \max_{j \in \set{1,\ldots, m}} \deg(Q_j)\). Then the polynomials \(S,Q_1,\ldots,Q_m\) are nonzero even when regarded as polynomials over \(\bbF_p\), hence
  the same can be said also about the polynomials $S\cdot Q_i$, for all $1\leq i\leq m$.
  However, by~\eqref{eq:final}, for every \(1 \le i \le m\) the polynomial \(S\cdot Q_i\) has at least
  \(p > \deg(S)+\deg(Q_i)\) roots over \(\bbF_p\). This is a contradiction. 
\end{proof}

\section{Applications in weighted automata}
\label{sec:weighted}
In this section we discuss the implications of the results we presented in the previous sections for various questions regarding the expressive power of extensions of weighted automata. We will briefly describe the model of weighted automata and focus only on its expressive power. We refer an interested reader to \eg~\cite{AlmagorBK11,DrosteHWA09} for an introduction to the area.

Given a semiring $\bbS$, a \emph{weighted automaton} $\cA$ is a tuple $(d,\Sigma,\set{M_a}_{a \in \Sigma},I,F)$, where: 
\begin{itemize}
\item $d\in \N$ is the dimension;
\item $\Sigma$ is a finite alphabet;
\item every $M_a$ is a $d\times d$ matrix over $\bbS$; and
\item $I$ and $F$ are the initial and the final vector in $\bbS^d$, respectively.
\end{itemize}
In this paper we only consider the semiring $\bbS=\bbQ$.
A weighted automaton defines a function $\sem{\cA} \colon \Sigma^* \to \bbS$ as follows: if $w = a_1 \ldots a_n \in \Sigma^*$, then
\begin{align}\label{def:wa}
\sem{\cA}(w) \; = \; I^{\transpose} \cdot M_{a_1}M_{a_2}\ldots M_{a_n} \cdot F.
\end{align}
Note that when $|\Sigma| = 1$, this definition coincides with (the matrix form of) the definition~\eqref{def:lrs2} of linear recursive sequences. 
Assuming $|\Sigma| = 1$, one can identify each word with its length, which means that a weighted automaton defines a sequence $\sem{\cA} \colon \bbN \to \bbS$. 
Therefore, weighted automata recognise exactly linear recursive sequences. See~\cite{BarloyFLM20} for a broader discussion of the connection between linear recursive sequences and weighted automata.

We now discuss three nonlinear extensions of weighted automata that can be found in the literature. These extensions are studied in different areas and, as far as we are aware, 
they have never been compared in terms of expressive power before. 
We show that the results we presented in Sections~\ref{sec:periodicity} and~\ref{sec:cancelling} can be used to prove separation results, in terms of the expressive power, for some of these classes. 

Like in the case of weighted automata, any automaton within the considered classes defines a function $f \colon \Sigma^* \to \bbQ$, where $\Sigma$ is the working alphabet. 
For our purposes, we restrict attention to the case of unary alphabets, that is, $|\Sigma| = 1$. Thus, the three considered classes of extended weighted automata correspond to three separate classes of sequences $f \colon \bbN \to \bbQ$, similarly as standard weighted automata correspond to the class of linear recursive sequences.

\subparagraph*{Cost-register automata (CRA).}
Cost-register automata (CRA) were introduced in at least three contexts~\cite{Senizergues07,AlurDDRY13,BenediktDSW17}. To avoid technical details, we simply observe that CRAs over unary alphabets recognize exactly poly-recursive sequences, as defined in Definition~\ref{deff:prs}. Since~\cite{Senizergues07,AlurDDRY13,BenediktDSW17} discuss several variants of CRAs, to avoid ambiguity we refer to the definition of a CRA that can be found in~\cite{MazowieckiR19}\footnote{The equivalence of CRAs and poly-recursive sequences over a unary alphabet is basically a syntactic translation, if one assumes that CRAs have only one state. Proving that every CRA can be defined by a one state CRA is a simple encoding of states into the registers.}.

\subparagraph*{Weighted context-free grammars (WCFG).}
Weighted automata can be equivalently defined as an extension of finite automata, where each translation is labelled by an element of the semiring $\bbS$ (see \eg~\cite{AlmagorBK11}). In short, each run is assigned a value: the semiring product of the labels of all the transitions used in the run. Given a word $w$, the automaton outputs the semiring sum of the values assigned to all runs accepting $w$.

Weighted context-free grammars are an extension of context-free grammars in the same way weighted automata are an extension of finite automata. Every grammar rule is assigned a label from $\bbS$. Then every derivation tree is assigned the semiring product of the labels of all the rules used in the tree. The output for a word $w$ is defined as the semiring sum of all values assigned to derivation trees of $w$. See \eg~\cite{GantyG18} for more details. Here we present only one example from~\cite{GantyG18} over the semiring $\bbQ$.

Consider the grammar with one nonterminal $X$ (which is also the starting nonterminal) and one terminal $a$ with the following rules: $X \to a$, $X \to aXX$. Both rules are assigned weight~$1$. Therefore, for every word $a^n$ the output is the number of derivation trees. It is easy to see that if we denote the output on the word $a^n$ by $C_n$, then $C_0 = 1$ and $C_{n+1} = \sum_{i = 0}^{n} C_iC_{n-i}$ for all $n\in \N$, hence
$C_n$ is just the $n$th Catalan number. This proves that Catalan numbers can be defined by a unary-alphabet WCFG over $\bbQ$.
By Corollary~\ref{cor:catalan}, we can now conclude the following.

\begin{corollary}
The class of sequences definable by unary-alphabet WCFGs over $\bbQ$ is not contained in the class of sequences recognizable by unary-alphabet CRAs over $\bbQ$.
\end{corollary}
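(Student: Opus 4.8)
The plan is to exhibit a single sequence that is definable by a unary-alphabet WCFG over $\bbQ$ but provably not recognizable by any unary-alphabet CRA over $\bbQ$; the sequence of Catalan numbers $C_n$ is the natural witness, and in fact every ingredient needed is already established, so the task reduces to assembling the pieces in the right order.

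First I would invoke the worked example immediately preceding the statement: the grammar with the single nonterminal $X$ and rules $X \to a$ and $X \to aXX$, both of weight $1$, outputs on the word $a^n$ the number of its derivation trees, and this count satisfies $C_0 = 1$ together with $C_{n+1} = \sum_{i=0}^n C_i C_{n-i}$, which is exactly the Catalan recurrence. Hence the sequence $\langle C_n\rangle_{n\in\bbN}$ belongs to the class of sequences definable by unary-alphabet WCFGs over $\bbQ$.

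Second I would use the correspondence recorded above that CRAs over unary alphabets recognize exactly the poly-recursive sequences of Definition~\ref{deff:prs}. Under this equivalence, membership of $C_n$ in the CRA class would be the same as $C_n$ being poly-recursive. But Corollary~\ref{cor:catalan} asserts precisely that the Catalan numbers are \emph{not} poly-recursive; recall that this corollary is itself obtained by combining the modular-periodicity property of Theorem~\ref{thm:periodic} with the block-length formula of Alter and Kubota in Theorem~\ref{thm:klapki}. Consequently $C_n$ is not recognizable by any unary-alphabet CRA over $\bbQ$.

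Putting the two observations together, the sequence $C_n$ lies in the WCFG class but outside the CRA class, which establishes the non-containment claimed. I expect no genuine obstacle at this stage: all the mathematical content resides in Corollary~\ref{cor:catalan} (equivalently in Theorems~\ref{thm:periodic} and~\ref{thm:klapki}), and the only care required here is to verify that the grammar example really produces the Catalan numbers over the unary alphabet, and that the CRA/poly-recursive equivalence is applied in the correct direction, namely to transfer non-definability as a poly-recursive sequence into non-recognizability by a CRA.
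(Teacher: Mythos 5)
Your proposal is correct and matches the paper's own argument exactly: the paper uses the same weight-$1$ grammar $X \to a$, $X \to aXX$ to show that the Catalan numbers are definable by a unary-alphabet WCFG, and then concludes via Corollary~\ref{cor:catalan} together with the stated equivalence between unary-alphabet CRAs and poly-recursive sequences. Your extra care about applying that equivalence in the right direction is sound but introduces no deviation from the paper's reasoning.
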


\subparagraph*{Weighted MSO (WMSO).}
Weighted MSO logic~\cite{DrosteG07,KreutzerR13} was introduced as a logic involving weights that intended to capture the expressive power of weighted automata, similarly as finite automata are characterized by MSO. In general, WMSO turns out to be strictly more expressive than weighted automata. We will not define the whole syntax of WMSO, only a simple fragment that does not even use variables. See~\cite{DrosteG07,KreutzerR13} for the full definition.

Fix the semiring $\bbS = \bbQ$.
Similarly as for weighted automata, every WMSO formula $\varphi$ over $\bbQ$ defines a function $\sem{\varphi}\colon \Sigma^*\to \bbQ$.
As for atomic formulas, every $c \in \bbQ$ is an atomic formula that defines the constant function $\sem{c}(w) = c$. Instead of the boolean connectives $\vee$ and $\wedge$, 
WMSO formulas can be added using $+$ and multiplied using $\cdot$, with the obvious semantics. 
Instead of having the existential quantifier $\exists_x$ and the universal quantifier $\forall_x$,
we have the sum quantifier $\sum_x$ and the product quantifier $\prod_x$. 
Then \[\sem{\sum_x \varphi} (w) = \sum_{i=1}^n \sem{\varphi[x \to a_i]}(w)\qquad\textrm{for all }w = a_1\ldots a_n\in \Sigma^*,\] 
and similarly for $\sem{\prod_x \varphi} (w)$. For example, $\sem{\sum_x 1} (a^n) = n$.
It follows that \[\sem{\prod_x \sum_y 1} (a^n) = n^n.\]
This proves that the sequence $n^n$ can be defined in unary-alphabet WMSO over $\bbQ$,
so by Theorem~\ref{thm:nton} we may conclude the following.

\begin{corollary}
The class of sequences definable in unary-alphabet WMSO over $\bbQ$ is not contained in the class of sequences recognizable by unary-alphabet CRAs over $\bbQ$.
\end{corollary}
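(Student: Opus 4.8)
The plan is to exhibit the single sequence $u_n = n^n$ as a witness lying in the WMSO class but outside the CRA class. The noncontainment then follows by assembling three facts in the right order: that $n^n$ is definable in unary-alphabet WMSO, that unary-alphabet CRAs recognize exactly the poly-recursive sequences, and that $n^n$ is not poly-recursive by Theorem~\ref{thm:nton}.

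First I would verify that $n^n$ is definable in the variable-free fragment of WMSO described above. Consider the formula $\varphi = \prod_x \sum_y 1$, which uses only the constant atom $1$ together with the sum and product quantifiers. By the semantics of the sum quantifier, for any word $a^n$ of length $n$ the inner subformula evaluates to $\sem{\sum_y 1}(a^n) = \sum_{i=1}^n \sem{1}(a^n) = n$, a value independent of the position assigned to $x$. Applying the product quantifier over the $n$ positions then yields $\sem{\varphi}(a^n) = \prod_{i=1}^n n = n^n$. Identifying each unary word with its length, this shows that $\sem{\varphi}$ is precisely the sequence $u_n = n^n$, so $n^n$ belongs to the class of sequences definable by unary-alphabet WMSO over $\bbQ$.

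Next I would invoke the characterization recorded in the CRA subsection: over unary alphabets, CRAs recognize exactly the poly-recursive sequences of Definition~\ref{deff:prs}. Consequently, if $n^n$ were recognizable by some unary-alphabet CRA over $\bbQ$, it would have to be poly-recursive. But Theorem~\ref{thm:nton} establishes that $u_n = n^n$ is not poly-recursive, a contradiction. Hence $n^n$ is not recognizable by any unary-alphabet CRA over $\bbQ$.

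Combining the two observations, the sequence $n^n$ is a member of the WMSO class that does not lie in the CRA class, which is exactly the noncontainment claimed. The only genuinely nontrivial ingredient is Theorem~\ref{thm:nton}, proved earlier in the paper; the remaining steps are a direct semantic computation and an appeal to the stated equivalence between unary-alphabet CRAs and poly-recursive sequences. I therefore do not anticipate any obstacle beyond being careful that the witnessing formula $\prod_x \sum_y 1$ uses only constructs available in the fragment of WMSO we have introduced, which it does.
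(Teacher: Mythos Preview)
Your proposal is correct and follows exactly the same approach as the paper: exhibit $n^n$ as the separating witness via the WMSO formula $\prod_x \sum_y 1$, invoke the equivalence between unary-alphabet CRAs and poly-recursive sequences, and conclude using Theorem~\ref{thm:nton}. The paper presents this argument in the text immediately preceding the corollary rather than as a separate proof, but the content is identical.
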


\section{Conclusion}
\label{sec:conclusion}
We proved that two sequences, the Catalan numbers $C_n$ and $u_n = n^n$, are not
polynomial recursive.  For this, we exhibited two properties that poly-recursive
sequences always satisfy: ultimate periodicity modulo large prime numbers and
admitting a cancelling polynomial.

Going further than poly-recursive sequences, one can consider the class of
{\em{rational recursive sequences}}.  These are specified like polynomial
recursive sequences (Definition~\ref{deff:prs}) but on the right hand side of
the system of equations~\eqref{def:prs} we allow the $P_i$'s to be taken from
the field of fractions of the polynomial ring.  That is, each $P_i$ is of
the form $P_i(x_1,\ldots,x_k)=\frac{Q_i(x_1,\ldots,x_k)}{R_i(x_1,\ldots,x_k)}$,
where $Q_i,R_i\in \bbQ[x_1,\ldots,x_k]$ and $R_i\neq 0$.

This class extends both poly-recursive sequences and holonomic sequences (see Section~\ref{sec:introduction}).
For example one can express the sequence
of Catalan numbers, since $C_{n+1} = \frac{4n+2}{n+2}\cdot C_n$ and an ancillary
sequence can hold the value \(n\).  On the other hand,
the proof of the existence of cancelling polynomials for poly-recursive
sequences (Theorem~\ref{thm:cancelling}) carries over to rational recursive
sequences.  In particular, \(u_n = n^n\) is not even rational recursive.

This discussion points to the notion of rational
recursive sequences as a natural object for future research.

\subparagraph*{Acknowledgements.} We thank Maria Donten-Bury for suggesting the proof of Theorem~\ref{thm:cancelling} presented here. 
This proof replaced our previous more elaborate and less transparent argument. 
We also thank James Worrell, David Purser and Markus Whiteland for helpful
comments.
The research for this work was carried out in part at the
  Autob\'{o}z Research Camp in 2019 in Firbush, Scotland.
Finally, we thank the participants of the automata seminar at the University of Warsaw for an insightful discussion 
on the class of rational recursive sequences (considered in
Section~\ref{sec:conclusion}).

\bibliographystyle{plainurl}
\bibliography{bib2}

\begin{thebibliography}{10}

\bibitem{AlmagorBK11}
Shaull Almagor, Udi Boker, and Orna Kupferman.
\newblock What's decidable about weighted automata?
\newblock In {\em Automated Technology for Verification and Analysis, 9th
  International Symposium, {ATVA} 2011, Taipei, Taiwan, October 11-14, 2011.
  Proceedings}, pages 482--491, 2011.
\newblock URL: \url{https://doi.org/10.1007/978-3-642-24372-1\_37}, \href
  {http://dx.doi.org/10.1007/978-3-642-24372-1\_37}
  {\path{doi:10.1007/978-3-642-24372-1\_37}}.

\bibitem{klapki}
Ronald Alter and K.~K. Kubota.
\newblock Prime and prime power divisibility of {C}atalan numbers.
\newblock {\em Journal of Combinatorial Theory, Series A}, 15(3):243--256,
  1973.
\newblock URL:
  \url{http://www.sciencedirect.com/science/article/pii/0097316573900721},
  \href {http://dx.doi.org/https://doi.org/10.1016/0097-3165(73)90072-1}
  {\path{doi:https://doi.org/10.1016/0097-3165(73)90072-1}}.

\bibitem{AlurC10}
Rajeev Alur and Pavol Cern{\'{y}}.
\newblock Expressiveness of streaming string transducers.
\newblock In {\em {IARCS} Annual Conference on Foundations of Software
  Technology and Theoretical Computer Science, {FSTTCS} 2010, December 15-18,
  2010, Chennai, India}, pages 1--12, 2010.
\newblock URL: \url{https://doi.org/10.4230/LIPIcs.FSTTCS.2010.1}, \href
  {http://dx.doi.org/10.4230/LIPIcs.FSTTCS.2010.1}
  {\path{doi:10.4230/LIPIcs.FSTTCS.2010.1}}.

\bibitem{AlurDDRY13}
Rajeev Alur, Loris D'Antoni, Jyotirmoy~V. Deshmukh, Mukund Raghothaman, and
  Yifei Yuan.
\newblock Regular functions and cost register automata.
\newblock In {\em 28th Annual {ACM/IEEE} Symposium on Logic in Computer
  Science, {LICS} 2013, New Orleans, LA, USA, June 25-28, 2013}, pages 13--22,
  2013.
\newblock URL: \url{https://doi.org/10.1109/LICS.2013.65}, \href
  {http://dx.doi.org/10.1109/LICS.2013.65} {\path{doi:10.1109/LICS.2013.65}}.

\bibitem{baker1979trainable}
James~K Baker.
\newblock Trainable grammars for speech recognition.
\newblock {\em The Journal of the Acoustical Society of America},
  65(S1):S132--S132, 1979.

\bibitem{BarloyFLM20}
Corentin Barloy, Nathana{\"{e}}l Fijalkow, Nathan Lhote, and Filip Mazowiecki.
\newblock A robust class of linear recurrence sequences.
\newblock In {\em 28th {EACSL} Annual Conference on Computer Science Logic,
  {CSL} 2020, January 13-16, 2020, Barcelona, Spain}, pages 9:1--9:16, 2020.
\newblock URL: \url{https://doi.org/10.4230/LIPIcs.CSL.2020.9}, \href
  {http://dx.doi.org/10.4230/LIPIcs.CSL.2020.9}
  {\path{doi:10.4230/LIPIcs.CSL.2020.9}}.

\bibitem{BenediktDSW17}
Michael Benedikt, Timothy Duff, Aditya Sharad, and James Worrell.
\newblock Polynomial automata: Zeroness and applications.
\newblock In {\em 32nd Annual {ACM/IEEE} Symposium on Logic in Computer
  Science, {LICS} 2017, Reykjavik, Iceland, June 20-23, 2017}, pages 1--12,
  2017.
\newblock URL: \url{https://doi.org/10.1109/LICS.2017.8005101}, \href
  {http://dx.doi.org/10.1109/LICS.2017.8005101}
  {\path{doi:10.1109/LICS.2017.8005101}}.

\bibitem{BhattiproluGV17}
Vijay Bhattiprolu, Spencer Gordon, and Mahesh Viswanathan.
\newblock Extending parikh's theorem to weighted and probabilistic context-free
  grammars.
\newblock In {\em Quantitative Evaluation of Systems - 14th International
  Conference, {QEST} 2017, Berlin, Germany, September 5-7, 2017, Proceedings},
  pages 3--19, 2017.
\newblock URL: \url{https://doi.org/10.1007/978-3-319-66335-7\_1}, \href
  {http://dx.doi.org/10.1007/978-3-319-66335-7\_1}
  {\path{doi:10.1007/978-3-319-66335-7\_1}}.

\bibitem{DrosteG07}
Manfred Droste and Paul Gastin.
\newblock Weighted automata and weighted logics.
\newblock {\em Theor. Comput. Sci.}, 380(1-2):69--86, 2007.
\newblock URL: \url{https://doi.org/10.1016/j.tcs.2007.02.055}, \href
  {http://dx.doi.org/10.1016/j.tcs.2007.02.055}
  {\path{doi:10.1016/j.tcs.2007.02.055}}.

\bibitem{DrosteHWA09}
Manfred Droste, Werner Kuich, and Heiko Vogler.
\newblock {\em Handbook of Weighted Automata}.
\newblock Springer, 1st edition, 2009.

\bibitem{FerMarSen14}
Julien Fert{\'{e}}, Nathalie Marin, and G{\'{e}}raud S{\'{e}}nizergues.
\newblock Word-mappings of level 2.
\newblock {\em Theory Comput. Syst.}, 54(1):111--148, 2014.
\newblock URL: \url{https://doi.org/10.1007/s00224-013-9489-5}, \href
  {http://dx.doi.org/10.1007/s00224-013-9489-5}
  {\path{doi:10.1007/s00224-013-9489-5}}.

\bibitem{FraSen06}
S.~Fratani and G{\'{e}}raud S{\'{e}}nizergues.
\newblock Iterated pushdown automata and sequences of rational numbers.
\newblock {\em Ann. Pure Appl. Logic}, 141(3):363--411, 2006.
\newblock URL: \url{https://doi.org/10.1016/j.apal.2005.12.004}, \href
  {http://dx.doi.org/10.1016/j.apal.2005.12.004}
  {\path{doi:10.1016/j.apal.2005.12.004}}.

\bibitem{GantyG18}
Pierre Ganty and Elena Guti{\'{e}}rrez.
\newblock The {P}arikh property for weighted context-free grammars.
\newblock In {\em 38th {IARCS} Annual Conference on Foundations of Software
  Technology and Theoretical Computer Science, {FSTTCS} 2018, December 11-13,
  2018, Ahmedabad, India}, pages 32:1--32:20, 2018.
\newblock URL: \url{https://doi.org/10.4230/LIPIcs.FSTTCS.2018.32}, \href
  {http://dx.doi.org/10.4230/LIPIcs.FSTTCS.2018.32}
  {\path{doi:10.4230/LIPIcs.FSTTCS.2018.32}}.

\bibitem{Gerhold04}
Stefan Gerhold.
\newblock On some non-holonomic sequences.
\newblock {\em Electr. J. Comb.}, 11(1), 2004.
\newblock URL:
  \url{http://www.combinatorics.org/Volume\_11/Abstracts/v11i1r87.html}.

\bibitem{halava2005skolem}
Vesa Halava, Tero Harju, Mika Hirvensalo, and Juhani Karhum{\"a}ki.
\newblock Skolem’s problem-on the border between decidability and
  undecidability.
\newblock Technical report, Technical Report 683, Turku Centre for Computer
  Science, 2005.

\bibitem{KauersP11}
Manuel Kauers and Peter Paule.
\newblock {\em The Concrete Tetrahedron - Symbolic Sums, Recurrence Equations,
  Generating Functions, Asymptotic Estimates}.
\newblock Texts {\&} Monographs in Symbolic Computation. Springer, 2011.
\newblock URL: \url{https://doi.org/10.1007/978-3-7091-0445-3}, \href
  {http://dx.doi.org/10.1007/978-3-7091-0445-3}
  {\path{doi:10.1007/978-3-7091-0445-3}}.

\bibitem{KreutzerR13}
Stephan Kreutzer and Cristian Riveros.
\newblock Quantitative monadic second-order logic.
\newblock In {\em 28th Annual {ACM/IEEE} Symposium on Logic in Computer
  Science, {LICS} 2013, New Orleans, LA, USA, June 25-28, 2013}, pages
  113--122, 2013.
\newblock URL: \url{https://doi.org/10.1109/LICS.2013.16}, \href
  {http://dx.doi.org/10.1109/LICS.2013.16} {\path{doi:10.1109/LICS.2013.16}}.

\bibitem{Lang}
Serge Lang.
\newblock {\em Algebra}.
\newblock Graduate Texts in Mathematics. Springer, 2002.

\bibitem{MazowieckiR19}
Filip Mazowiecki and Cristian Riveros.
\newblock Copyless cost-register automata: Structure, expressiveness, and
  closure properties.
\newblock {\em J. Comput. Syst. Sci.}, 100:1--29, 2019.
\newblock URL: \url{https://doi.org/10.1016/j.jcss.2018.07.002}, \href
  {http://dx.doi.org/10.1016/j.jcss.2018.07.002}
  {\path{doi:10.1016/j.jcss.2018.07.002}}.

\bibitem{OuaknineW15}
Jo{\"{e}}l Ouaknine and James Worrell.
\newblock On linear recurrence sequences and loop termination.
\newblock {\em {SIGLOG} News}, 2(2):4--13, 2015.
\newblock URL: \url{https://dl.acm.org/citation.cfm?id=2766191}.

\bibitem{Senizergues07}
G{\'{e}}raud S{\'{e}}nizergues.
\newblock Sequences of level 1, 2, 3, ..., \emph{k} , ..
\newblock In {\em Computer Science - Theory and Applications, Second
  International Symposium on Computer Science in Russia, {CSR} 2007,
  Ekaterinburg, Russia, September 3-7, 2007, Proceedings}, pages 24--32, 2007.
\newblock URL: \url{https://doi.org/10.1007/978-3-540-74510-5\_6}, \href
  {http://dx.doi.org/10.1007/978-3-540-74510-5\_6}
  {\path{doi:10.1007/978-3-540-74510-5\_6}}.

\end{thebibliography}
\end{document}